\renewcommand\footnotetextcopyrightpermission[1]{}
\let\@realLabel\label
\def\SPDOffline{SPDOffline\xspace}
\NewDocumentCommand\UDP{o}{UDP\IfValueT{#1}{\textsubscript{#1}}\xspace}
\newcommand\Angle[1]{\langle#1\rangle}
\def\conc{\mathbin{\cdot}}
\def\mod{\mathbin{\%}}
\DeclareMathOperator\evts{evts}
\DeclareMathOperator\thd{thd}
\DeclareMathOperator\thds{thds}
\DeclareMathOperator\proj{proj}
\def\ba{\begin{array}}
\def\ea{\end{array}}
\def\bda{\begin{displaymath}\ba}
\def\eda{\ea\end{displaymath}}
\newcommand{\thread}[1]{\ensuremath{t_{#1}}} 
\newcommand{\reqLockE}[1]{\reqE{#1}}
\newcommand{\lockE}[1]{\acqE{#1}} 
\newcommand{\lockEb}[1]{\acqEb{#1}}
\newcommand{\unlockE}[1]{\relE{#1}} 
\newcommand{\unlockEb}[1]{\relEb{#1}}
\newcommand{\eventE}[1]{e_{#1}}
\newcommand\evtAcc{a}
\newcommand\evtRel{r}
\definecolor{darkgray}{gray}{0.6}
\definecolor{GrayBgColor}{rgb}{0.9, 0.9, 0.9}
\newcommand{\BLOCKED}[1]{\underline{#1}} 
\newcommand{\HIGHLIGHT}[1]{\colorbox{GrayBgColor}{\ensuremath{#1}}}
\newcommand{\LD}[3]{{\langle #1, #2, #3 \rangle}}
\newcommand\LDInduced[1]{\ensuremath{D_{#1}}}
\newcommand{\LSSym}{{\mathit LH}} 
\newcommand{\SET}[1]{\{ #1 \}}
\newcommand{\DP}[2]{{\mathit dp}_{#1} #2} 
\newcommand{\AllDP}[1]{\mathit{DP}_{#1}}
\newcommand{\standard}{std}
\newcommand{\multithreadindex}{MTi}
\newcommand{\all}{all}
\newcommand{\LHeld}[3]{\LSSym^{#2}_{#1}(#3)} 
\newcommand{\LHeldT}[1]{\LSSym_{#1}}
\newcommand{\StdLHeldT}{\LHeldT{\standard}}
\newcommand{\GeneralLHeld}[2]{\LSSym_{#1}(#2)}   
\newcommand{\GeneralLHeldd}[1]{\LSSym_{#1}}   
\newcommand{\LH}[2]{\LSSym_{#1}(#2)}
\newcommand{\LHH}[1]{\LSSym_{#1}}
\newcommand{\TOLHeldT}{\LHeldT{to}}
\newcommand{\LWLHeldT}{\LHeldT{lw}}
\newcommand{\ROLHeldT}{\LHeldT{ro}}
\newcommand{\HBLHeld}[2]{\LHeld{hb}{#1}{#2}}
\newcommand{\MultiLHeldIdxT}{\LHeldT{\multithreadindex}}
\newcommand{\GlobalLHeldT}{\LHeldT{\all}}
\newcommand{\PD}[2]{\DP{#1}{#2}\checkmark} 
\newcommand{\SPD}[3]{\DP{#1}{#2}{#3}\checkmark^{sync}} 
\newcommand{\AllPD}[1]{{\mathit PD}_{#1}}
\newcommand{\AllSPD}[2]{{\mathit SPD}^{#1}_{#2}}
\newcommand{\CS}[2]{{\mathit CS}_{#1}(#2)}
\newcommand{\CSS}[1]{{\mathit CS}}
\newcommand{\CSect}[4]{{\mathit CS}^{#2}_{#1}(#4)^{#3}} 
\newcommand{\AcqRelPair}[2]{\Angle{#1,#2}}
\newcommand{\GeneralCSect}[2]{{\mathit CS}_{#1}(#2)}
\newcommand{\TOCSect}[3]{\CSect{to}{#1}{#2}{#3}}
\newcommand{\LKA}{\LK1}
\newcommand{\LKB}{\LK2}
\newcommand{\LKC}{\LK3}
\newcommand{\VA}{x}
\newcommand{\VB}{y}
\newcommand{\VC}{z}
\newcommand{\LK}[1]{l_{#1}}
\newcommand{\PO}{\mbox{P}}  
\newcommand{\TO}{\mbox{TO}} 
\newcommand{\LW}{\mbox{LW}} 
\newcommand{\RO}{\mbox{RO}} 
\newcommand{\HB}{\mbox{HB}} 
\newcommand{\crpSymbol}{\mbox{\tiny crp}}
\newcommand{\crpLTSym}{<_{\crpSymbol}}
\newcommand{\crpLt}[1][]{\crpLTSym^{#1}}
\newcommand{\pSymbol}{\mbox{\tiny p}}
\newcommand{\pLTSym}{<_{\pSymbol}}
\newcommand{\pLt}[1][]{\pLTSym^{#1}}
\newcommand{\toSymbol}{\mbox{\tiny to}}
\newcommand{\toLTSym}{<_{\toSymbol}}
\newcommand{\toLt}[1][]{\toLTSym^{#1}}
\newcommand{\fjSymbol}{\mbox{\tiny fj}}
\newcommand{\fjLTSym}{<_{\fjSymbol}}
\newcommand{\lwSymbol}{\mbox{\tiny lw}}
\newcommand{\lwLTSym}{<_{\lwSymbol}}
\newcommand{\lwLt}[1][]{\lwLTSym^{#1}}
\newcommand{\roSymbol}{\mbox{\tiny ro}}
\newcommand{\roLTSym}{<_{\roSymbol}}
\newcommand{\roLt}[1][]{\roLTSym^{#1}}
\newcommand{\hbSymbol}{\mbox{\tiny hb}}
\newcommand{\hbLTSym}{<_{\hbSymbol}}
\newcommand{\hbLt}[1][]{\hbLTSym^{#1}}
\newcommand\indexedcap{\mathbin{\cap'}} 
\newcommand{\OK}{\mathit{ok}}
\newcommand{\FAIL}{\mathit{fail}}
\newcommand{\nf}[1]{\mbox{\normalfont #1}}
\newcommand{\uline}[1]{\rule[0pt]{#1}{0.4pt}}
\newcommand{\dontCare}{\uline{.15cm}}
\newcommand{\incC}[2]{{\mathit inc}(#1,#2)}
\newcommand{\maxN}[2]{{\mathit max}(#1,#2)}
\newcommand{\accVC}[2]{#1[ #2 ]}
\newcommand{\VCConc}{\mathrel{||}}
\newcommand{\acqVC}[1]{{\mathit Acq(#1)}}
\newcommand{\threadVC}[1]{\textit{Th}(#1)}
\newcommand{\lastWriteVC}[1]{\ensuremath{L_W}(#1)}
\newcommand{\LDMapSym}{\ensuremath{D}}
\newcommand{\LDMap}[3]{\LDMapSym{\langle #1, #2, #3 \rangle}}
\newcommand{\PList}{\ensuremath{P}}
\@ifdefinable\MakeTokenFromCsname{%
  \long\def\MakeTokenFromCsname#1#{%
    \romannumeral0\@MakeTokenFromCsname{#1}%
  }%
}%
\newcommand\@MakeTokenFromCsname[2]{%
  \expandafter\@@@MakeTokenFromCsname\expandafter{\csname#2\endcsname}{ #1}%
}%
\newcommand\@@@MakeTokenFromCsname[2]{#2#1}%
\NewDocumentCommand\@DPO{m m o}{\mathrel{#2_{\text{\tiny #1}}\IfValueT{#3}{^{#3}}}}
\NewDocumentCommand\DPO{o m O{<} O{\leq} O{>} O{\geq} O{\parallel}}{%
  \MakeTokenFromCsname\NewDocumentCommand{#2Lt}{o}{\@DPO{\IfValueTF{#1}{#1}{#2}}{#3}[##1]}%
  \MakeTokenFromCsname\NewDocumentCommand{#2Leq}{o}{\@DPO{\IfValueTF{#1}{#1}{#2}}{#4}[##1]}%
  \MakeTokenFromCsname\NewDocumentCommand{#2Gt}{o}{\@DPO{\IfValueTF{#1}{#1}{#2}}{#5}[##1]}%
  \MakeTokenFromCsname\NewDocumentCommand{#2Geq}{o}{\@DPO{\IfValueTF{#1}{#1}{#2}}{#6}[##1]}%
  \MakeTokenFromCsname\NewDocumentCommand{#2Conc}{o}{\@DPO{\IfValueTF{#1}{#1}{#2}}{#7}[##1]}%
}
\def\DMO#1{%
  \MakeTokenFromCsname\DeclareMathOperator{@#1}{#1}%
  \MakeTokenFromCsname\NewDocumentCommand{#1}{o}{\MakeTokenFromCsname{@#1}\IfValueT{##1}{_{##1}}}%
}
\renewcommand{\thdOf}{\thd}
\NewDocumentCommand\ld{mmmo}{(#1,#2,#3,\IfValueTF{#4}{\{#4\}}{\emptyset})}
\newcommand\storedDescriptionLabel[1]{%
  \let\orglabel\label
  \let\label\@gobble
  \phantomsection
  \edef\@currentlabel{#1}%
  \let\label\orglabel
  #1%
}
\NewDocumentCommand\@cond{m}{\mbox{\textsf{[#1]}}}
\NewDocumentCommand\cond{m}{\item[\@cond{#1}\label{#1}]}
\NewDocumentEnvironment{conditions}{O{}}{\begin{description}[format={\normalfont\storedDescriptionLabel},#1]}{\end{description}}
\NewDocumentEnvironment{conditions*}{O{}}{\begin{description*}[format={\normalfont\storedDescriptionLabel},mode=unboxed,#1]}{\end{description*}}
\newcounter{listLengthCounter}
\NewDocumentCommand\listLength{m}{%
  \setcounter{listLengthCounter}{0}%
  \@for\@el:=#1\do{\stepcounter{listLengthCounter}}%
}
\newcounter{currentCond}
\NewDocumentCommand\condRef{m}{%
  \listLength{#1}%
  \ifnum\c@listLengthCounter=1
    Condition~\ref{#1}%
  \else
    Conditions~%
    \setcounter{currentCond}{1}%
    \@for\@el:=#1\do{%
      \ifnum\c@currentCond>1
        \ifnum\c@currentCond=\c@listLengthCounter
          \ and~%
        \else
          ,
        \fi
      \fi
      \ref{\@el}%
      \stepcounter{currentCond}%
    }%
  \fi
}
\newcounter{trace}
\NewDocumentCommand\traceNew{m}{\refstepcounter{trace}\@realLabel{trace:#1}T_{\thetrace}}
\NewDocumentCommand\traceRef{m}{\text{\hyperref[trace:#1]{$T_{\text{\ref*{trace:#1}}}$}}}
\newlength{\arraysep}
\crefname{line}{line}{lines}
\title{Beyond Per-Thread Lock Sets: Multi-Thread 
       Critical Sections and Dynamic Deadlock Prediction
      }
\author{Martin Sulzmann}
\affiliation{
  \institution{Karlsruhe University of Applied Sciences}
  \streetaddress{Moltkestrasse 30}
  \city{Karlsruhe}
  \postcode{76133}
  \country{Germany}
}
\email{martin.sulzmann@gmail.com}
\begin{document}

\begin{abstract}

Lock sets are commonly used for dynamic analysis of deadlocks.
The standard per-thread lock set construction only considers locks acquired in the same thread,
but is unaware of locks acquired in another thread.
This leads to false positives and false negatives.
The underlying issue is that the commonly used notion of a critical section
on which the lock set construction relies ignores
events from other threads.
We give a trace-based characterization of critical sections that drops this restriction. Critical sections are no longer restricted to a single thread
and can cover multiple threads. Such forms of critical sections
exist, are natural, and correct the standard formulation.

We show how to soundly approximate the trace-based characterization
via partial order relations.
Thus, we obtain an improved lock set construction that can still be efficiently computed
and allows us to remove false positives reported by the DIRK deadlock predictor
and remove false negatives by extending
the SPDOffline deadlock predictor.
We integrate various lock set constructions with increased precision in an extension of SPDOffline.
Our extensions remain sound (no false positives) but are more complete (fewer false negatives) w.r.t. SPDOffline.
For an extensive standard benchmark suite we can also show that the performance is not affected.
\end{abstract}

\maketitle

\section{Introduction}
\label{sec:intro}

Bugs like deadlocks in concurrent software are notoriously hard to find
due to the non-deterministic nature of concurrent programs.
Our focus here is on resource deadlocks
in programs that make use of shared variables and locks
but without any message-passing primitives.
We follow the common assumption that
each release operation occurs in the same thread
as the prior acquire operation.
A resource deadlock arises if
there is a set of threads
where each thread in the set tries to acquire a lock that is held by some other thread in the set.

A well-studied line of work analyzes alternative thread interleavings
to  predict potential resource deadlocks from observed executions~\cite{10.5555/645880.672085,conf/hvc/BensalemH05,conf/pldi/JoshiPSN09,6718069,Samak:2014:TDD:2692916.2555262-previous,conf/ase/ZhouSLCL17,conf/oopsla/KalhaugeP18,conf/fse/CaiYWQP21,conf/pldi/TuncMPV23}.
The common method of these works is the search for \emph{deadlock patterns}.
A deadlock pattern represents a cyclic chain
among lock acquisitions in the observed execution
where the locks held by the acquires in the chain are disjoint.
The search for such patterns relies on the \emph{lock set} method~\cite{Dinning:1991:DAA:127695:122767};
the set of locks that are held (acquired) but not yet released at a certain point in the execution.
To the best of our knowledge,
all prior works assume that locks only protect operations in the acquiring thread.
This can lead to false deadlock warnings and even missed deadlocks
because locks can protect operations across thread boundaries.
We motivate the issue via the following two pseudo code examples
in~\cref{fig:pseudoDPFalsePositive} and~\cref{fig:pseudoDPFalseNegative}.



\begin{figure}[t]
    \begin{minipage}[b]{.3\linewidth}
      \begin{algorithmic}[1]
        \Procedure{main}{}
          \State $\forkE{\textsc{a}}$
          \State $\acqE{\vLock[1]}$
            \State $\vThd = \forkE{\textsc{b}}$
            \State $\joinE{\vThd}$
          \State $\relE{\vLock[1]}$
          \EndProcedure
        \algstore{false:main}
      \end{algorithmic}
    \end{minipage}%
    \hfill
    \begin{minipage}[b]{.3\linewidth}
      \begin{algorithmic}[1]
        \algrestore{false:main}
        \Procedure{a}{}
        \State $\acqE{\vLock[1]}$
        \State $\acqE{\vLock[2]}$
          \State $\acqE{\vLock[3]}$
          \State $\relE{\vLock[3]}$
          \State $\relE{\vLock[2]}$
          \State $\relE{\vLock[1]}$
          \EndProcedure
        \algstore{false:a}
      \end{algorithmic}
    \end{minipage}%
    \hfill
    \begin{minipage}[b]{.3\linewidth}
      \begin{algorithmic}[1]
        \algrestore{false:a}
        \Procedure{b}{}
          \State $\acqE{\vLock[3]}$
          \State $\acqE{\vLock[2]}$
          \State $\relE{\vLock[2]}$
          \State $\relE{\vLock[3]}$
        \EndProcedure
      \end{algorithmic}
  \end{minipage}%
  \caption{%
    Pseudo code program with false deadlock warning.
  }
  \label{fig:pseudoDPFalsePositive}
\end{figure}

\begin{figure}[t]
  \begin{minipage}[b]{.3\textwidth}
      \begin{algorithmic}[1]
        \Procedure{main}{}
        \State $\forkE{\textsc{a}}$
        \State $\acqE{\vLock[2]}$
        \State $\vThd = \forkE{\textsc{b}}$
        \State $\joinE{\vThd}$
        \State $\relE{\vLock[2]}$
        \EndProcedure
        \algstore{unrec:main}
      \end{algorithmic}
  \end{minipage}
  \hfill
  \begin{minipage}[b]{.3\textwidth}
    \begin{algorithmic}[1]
        \algrestore{unrec:main}
        \Procedure{a}{}
          \State  $\acqE{\vLock[1]}$
          \State  $\acqE{\vLock[2]}$
          \State  $\relE{\vLock[2]}$
          \State  $\relE{\vLock[1]}$
          \EndProcedure
        \algstore{unrec:a}
      \end{algorithmic}
  \end{minipage}
  \hfill
  \begin{minipage}[b]{.3\textwidth}
    \begin{algorithmic}[1]
        \algrestore{unrec:a}
        \Procedure{b}{}
        \State $\acqE{\vLock[1]}$
          \State $\relE{\vLock[1]}$
        \EndProcedure
      \end{algorithmic}
  \end{minipage}
  \caption{%
    Pseudo code program with unrecognized deadlock.
  } \label{fig:pseudoDPFalseNegative}
\end{figure}

\begin{figure}

  \begin{lstlisting}[language=C]
#include <pthread.h>

pthread_mutex_t l1 = PTHREAD_MUTEX_INITIALIZER;
pthread_mutex_t l2 = PTHREAD_MUTEX_INITIALIZER;

void* thread_l1_l2(void*) {
  pthread_mutex_lock(&l1);         // acquire l1
  pthread_mutex_lock(&l2);         // wants l2 -> potential deadlock here
  pthread_mutex_unlock(&l2);
  pthread_mutex_unlock(&l1);
  return NULL;
}

void* thread_l1(void*) {
  pthread_mutex_lock(&l1);        // wants l1 -> potential deadlock here
  pthread_mutex_unlock(&l1);
  return NULL;
}

int main() {
  pthread_t tid1, tid2;
  pthread_create(&tid2, 0, thread_l1_l2, 0);
  pthread_mutex_lock(&l2);         // acquire l2
  pthread_t tid3; pthread_create(&tid3, 0, thread_l1, 0);
  pthread_join(tid3, 0);           // waits for thread that wants l1
  pthread_mutex_unlock(&l2);
}
\end{lstlisting}

\caption{%
  Minimal C program corresponding to the pseudo code in~\cref{fig:pseudoDPFalseNegative}.
  Thread 1 acquires l1 then waits for l2. Thread 2 acquires l2, spawns a helper
  that tries to acquire l1, and then waits for it.
  This creates the multi-thread dependency that possibly leads to a deadlock.
  } \label{fig:real}
\end{figure}


The pseudo code in \cref{fig:pseudoDPFalsePositive} illustrates
the issue of a false deadlock warning (adapted from \citep[Fig.\ 7]{report/MathurPTV23}).
Here, $\forkE{\textsc{p}}$ runs procedure~\textsc{p} in a new thread and returns the thread's identifier~$\vThd$, potentially to be used in $\joinE{\vThd}$ to block until the procedure is done.
Operations \textbf{$\acqE{\vLock}$} and \textbf{$\relE{\vLock}$}
acquire and release lock~$\vLock$.
We assume an execution where first the code in procedure~\textsc{a} is fully executed followed
by the remaining code in procedure~\textsc{main} and~\textsc{b}.

For this execution, we encounter a \emph{cyclic lock dependency}: procedure~\textsc{a} holds lock~$\vLock[2]$ while requesting lock~$\vLock[3]$, and procedure~\textsc{b} holds lock~$\vLock[3]$ while requesting lock~$\vLock[2]$.
A cyclic dependency between locks held and locks requested is identified, indicating a potential deadlock.
However, notice that procedure~\textsc{main} starts procedure~\textsc{b} while holding lock~$\vLock[1]$ until procedure~\textsc{b} is completed, and procedure~\textsc{a} holds lock~$\vLock[1]$ while requesting lock~$\vLock[3]$.
Hence, the cycle is broken because the threads in which procedures~\textsc{a} and~\textsc{b}
are executed hold the common lock~$\vLock[1]$.
Such locks are referred to as \emph{guard} locks.
This prevents the simultaneous execution of procedures~\textsc{a} and~\textsc{b}
and therefore there is no deadlock here.

The issue is that the thread in which procedure~\textsc{b} is executed does not hold lock~$\vLock[1]$.
Hence, existing approaches fail to detect guard lock~$\vLock[1]$
and might wrongly issue a deadlock warning.
This is the case for the DIRK deadlock predictor~\cite{conf/oopsla/KalhaugeP18}
as reported in \cite{report/MathurPTV23}.



Besides missed guard locks, we may fail to detect a cyclic lock dependency altogether which then leads to a missed deadlock.
Consider the pseudo code in \cref{fig:pseudoDPFalseNegative}.
Depending on the schedule among threads we might run here into a deadlock.
Consider an execution where the thread in which
procedure~\textsc{a} runs acquires lock~$\vLock[1]$
followed by procedure~\textsc{main} acquiring lock~$\vLock[2]$.
The request to acquire lock~$\vLock[2]$ in procedure~\textsc{a}
is blocked because procedure~\textsc{main} holds this lock.
Procedure~\textsc{main} starts procedure~\textsc{b} in a new thread
but is then blocked at the join statement because
procedure~\textsc{b} never terminates.
In detail, the request to acquire lock~$\vLock[1]$
in procedure~\textsc{b}
is blocked because procedure~\textsc{a} holds this lock.
We conclude that all threads are blocked and therefore we encounter
a deadlock.

The challenge for a dynamic analysis is to predict this deadlock
based on an execution that runs through without problems.
Consider an alternative schedule where the code in procedure~\textsc{a} is fully executed followed by execution of the remaining code
in procedure~\textsc{main} and~\textsc{b}.
The SPDOffline deadlock predictor~\cite{conf/pldi/TuncMPV23} fails to predict the deadlock
based on this execution
because lock dependencies are constructed on a per-thread basis.
SPDOffline detects that there is a lock dependency among~$\vLock[1]$
and~$\vLock[2]$ (same thread) in procedure~\textsc{a} but fails
to detect there is a lock dependency among~$\vLock[2]$
and~$\vLock[1]$ in procedure~\textsc{main}.
The reason is that procedure~\textsc{main} holds lock~$\vLock[2]$
but the subsequent request to acquire lock~$\vLock[1]$
is carried out in a new thread.

To underline that the above issues are not just theoretical,
\cref{fig:real} provides a minimal C implementation
using POSIX pthreads~\cite{Butenhof1997}
of the pseudo code program in~\cref{fig:pseudoDPFalseNegative}.
The complete runnable C program, together with additional C examples, is included in the anonymous artifact accompanying this submission.



\paragraph{Summary.}

Lock sets are used to construct deadlock patterns.
The standard per-thread lock set construction only considers locks acquired in the same thread. This restriction can lead to
false positives and false negatives
when using deadlock patterns to predict deadlocks in observed executions.
We address these limitations by introducing
corrected constructions of lock sets and deadlock patterns.




\paragraph{Our contributions and outline.}

Specifically, our contributions are:
\begin{itemize}
\item We give a trace-based characterization of critical sections
      and lock sets where we drop the restriction
      that locks can only protect events in the same thread (\cref{sec:lockset}).
\item We employ the corrected lock set definition to cover a larger class
      of deadlock patterns and predictable deadlocks.
      Important properties such as stuckness of predictable deadlocks
      are retained (\cref{sec:deadlock}).

\item We describe a family of lock set functions defined in terms of partial orders
      that soundly approximate the trace-based characterization of a lock set.
      The standard per-thread lock set construction is a member of this family represented by the thread-order relation.
      Under thread-order, two events are ordered based on their textual position but only if they are in the same thread.
      More expressive partial orders include last write dependencies and other forms of must-happen dependencies (\cref{sec:partial-order}).

\item We show how to compute the various lock set functions and
          the resulting deadlock patterns
          efficiently (\cref{sec:general-lockset-algo}).

\item We integrate the various lock sets and resulting deadlock patterns
      into the state of the art deadlock-prediction algorithm \SPDOffline~\cite{conf/pldi/TuncMPV23}.
      Our extensions of \SPDOffline retain soundness and
      covers a larger class of predictable deadlocks
      (\cref{sec:spd-ext}).

    \item We validate our extensions of \SPDOffline in practice.
      The measure the performance we consider a large set of standard benchmark traces.
      To measure the precision we consider traces derived from C programs (\cref{sec:experiments}).

\end{itemize}
Our implementation including benchmark traces and
simple C programs showing that critical sections covering multiple threads
exist is available at \url{https://zenodo.org/records/17310954}.

\noindent
\Cref{sec:overview} gives an overview of our work.
\Cref{sec:prelim} introduces basic definitions and notations.
\Cref{sec:rw} summarizes related work.
\Cref{sec:concl} concludes the paper.
Proofs for all results stated are given in the appendix.

\section{Overview}
\label{sec:overview}

\begin{figure}
\bda{@{}lcl}


\ba{@{}l@{}}

 \ba{|l|l|l||l|l|}
 \hline
 \traceNew{LocksetsForDL} & \thread{1} & \thread{2} & \StdLHeldT  & \mbox{Lock dependencies} \\ \hline
\eventE{1}  & \forkE{\thread{2}}& &  &  \\
\eventE{2}  & \lockE{\LKA}& &   & \\
\eventE{3}  & \lockE{\LKB}&  &  \{ \LKA \}  &  \LD{\thread{1}}{\LKB}{\{\LKA\}}\\
\eventE{4}  & \unlockE{\LKB}& &  &  \\
\eventE{5}  & \unlockE{\LKA}& &  &  \\
\eventE{6}  & &\lockE{\LKB} &   & \\
\eventE{7}  & &\lockE{\LKA} &  \{ \LKB \} & \LD{\thread{2}}{\LKA}{\{\LKB \}} \\
\eventE{8}  & &\unlockE{\LKA} &   & \\
\eventE{9}  & &\unlockE{\LKB} &  &  \\

\hline \ea{}
\ea

\ \ \  & \ \ \

\ba{@{}l@{}}

\ba{|l|l|l|}
\hline
\traceNew{LocksetsForDLReordered} & \thread{1} & \thread{2}\\ \hline
\eventE{1}  & \forkE{\thread{2}}&\\
\eventE{2}  & \lockE{\LKA}&\\
\eventE{6}  & &\lockE{\LKB}\\

\eventE{3}  & \BLOCKED{\lockE{\LKB}}&\\
\eventE{7}  & &\BLOCKED{\lockE{\LKA}}\\

 \hline \ea{}

\ea

 \eda
 \caption{Trace \traceRef{LocksetsForDL} resulting from some program run (left),
   annotated with lock sets and resulting lock dependencies.
   Reordered trace \traceRef{LocksetsForDLReordered} exhibiting deadlock (right)
   where  $\BLOCKED{op}$ indicates a blocked operation.
}
   \label{fig:ex0}
\end{figure}

This section reviews the standard per-thread lock set construction known from the literature
and deadlock patterns which rely on lock sets.
We show that per-thread lock set construction has some inherent limitations
and explain our approach how to correct these limitations.

\subsection{Lock Sets for Deadlock Prediction}

We review the use of lock sets for deadlock prediction with the example in \cref{fig:ex0}.
The diagram on the left represents a program run
represented by a trace of events \traceRef{LocksetsForDL}.
We often represent traces in tabular notation
where each row contains exactly one event, in the column of the executing thread.
The textual order (from top to bottom) reflects the observed temporal order of events.

Events correspond to shared-memory/lock/thread operations executed by some thread.
Operation $\forkE{t}$ creates a new thread $t$ and
operation $\joinE{t}$ synchronizes with the termination of thread $t$.
We write $\LKA, \LKB, \LKC$ to denote locks and  $\VA,\VB,\VC$ to denote shared variables.
Operations $\lockE{\LK{}}/\unlockE{\LK{}}$ acquire/release lock $\LK{}$.
Operations $\readE{\VA}/\writeE{\VA}$ are shared memory read and write operations on $\VA$.
The same operation may appear multiple times in a trace, thus
we use indices as in $e_i$ to uniquely identify events in the trace.
Trace \traceRef{LocksetsForDL} can be more compactly represented
as the list $[e_1,e_2, e_3, e_4, e_5, e_6, e_7, e_8, e_9]$.

In trace \traceRef{LocksetsForDL} (left)
all events in thread $\thread{1}$ take place before the
events in thread $\thread{2}$, without apparent problem.
Based on trace \traceRef{LocksetsForDL} we reason about alternative schedules that
could have taken place and manifest a bug.
For example, we can swap the order among acquires on lock $\LKB$.
This then leads to the \emph{reordered trace prefix} $\traceRef{LocksetsForDLReordered} = [e_1,e_2,e_6]$ (right).
The highlighted acquire events $e_3$ and $e_7$ are not part of trace \traceRef{LocksetsForDLReordered}. Including either event would break the lock semantics by acquiring a lock already acquired but not yet released by the other thread.
Hence, we refer to $e_3$ and $e_7$ as blocked events.

To predict such deadlocks, the standard approach is to
derive from the trace a \emph{deadlock pattern}.
A deadlock pattern is a set $\SET{q_1,...,q_n}$ of events
where event $q_i$ represents the request to acquire
lock~$l_i$ in some thread $\thread{i}$
and the following two conditions are satisfied:
\begin{description}
\item[(DP-Cycle)] The lock acquired by $q_i$ is held by $q_{(i \mod n) + 1}$.
\item[(DP-Guard)] The locks held by each pair $q_i$, $q_j$ are disjoint.
\end{description}
Condition \textbf{DP-Cycle} states that there
is a cyclic chain of lock acquisitions and
condition \textbf{DP-Guard} states
that there is no common guard lock that prevents
the lock acquisitions to take place in some execution.
Crucially, both conditions rely on the notion
of \emph{locks held} (aka \emph{lock set})

The standard lock set definition in the literature is as follows.
Lock~$l$ is in the lock set of event~$e$, written
$l \in \StdLHeldT(e)$ if
\begin{description}
 \item[(CS-Enclosed)] $e$ is enclosed by acquire event~$\lockE{l}$
and release event~$\unlockE{l}$, and
\item[(CS-Match)]
    there are no other acquires and releases on~$l$ in between, and
\item[(CS-Per-Thread)] $e$ is in the same thread
  as $\lockE{l}$ and $\unlockE{l}$.
\end{description}
Thanks to condition~\textbf{CS-Per-Thread},
we can easily check conditions~\textbf{CS-Enclosed}
and~\textbf{CS-Match}
by considering the textual order of acquire-release
pairs~$\lockE{l}$ and~$\unlockE{l}$
that are in the same thread as~$e$.
This leads to the standard \emph{per-thread} lock set construction.

For trace \traceRef{LocksetsForDL} we record
the (per-thread) lock sets in some extra column.
We omit entries if they represent the empty set.
For example, for event $e_3$ that acquires
lock~$\LKB$ we find
the lock set $\StdLHeldT(e_3) = \{\LKA \}$ because
$e_3$ is surrounded by events~$e_2$ and $e_5$.

Based on the lock set information, we can derive
for each event~$e$ that acquires lock~$\LK{}$ a \emph{lock dependency}~\cite{conf/hvc/BensalemH05,conf/pldi/JoshiPSN09}
of the form $\LD{t}{\LK{}}{\StdLHeldT(e)}$
where $t$ refers to the thread that acquires lock $\LK{}$ and
$\StdLHeldT(e)$ is the
set of locks held by this thread while acquiring lock $\LK{}$.
For trace \traceRef{LocksetsForDL},
we obtain two lock dependencies:
$\LD{\thread{1}}{\LKB}{\{\LKA\}}$ and $\LD{\thread{2}}{\LKA}{\{\LKB\}}$.
The acquired lock $\LKB$ of the first dependency is in the lock set $\{ \LKB \}$
of the second dependency and and the acquired lock $\LKA$ of the second dependency
is in the lock set $\{ \LKA \}$ of the first dependency.
Hence, condition \textbf{DP-Cycle} is fulfilled.
The underlying lock sets are disjoint as $\{\LKA \} \cap \{ \LKB \} = \emptyset$.
Hence, condition \textbf{DP-Guard} is fulfilled.
This yields the deadlock pattern $\SET{e_3,e_7}$
and we issue a deadlock warning.

\begin{figure}
  \begin{subfigure}[t]{0.45\textwidth}
\bda{l}

\ba{|l|l|l|l|l|}
\hline
\traceNew{FPExample} & \thread{1} & \thread{2} & \thread{3} & \StdLHeldT \\ \hline
\eventE{1}  & \forkE{\thread{3}}&& & \\
\eventE{2}  & &&\lockE{\LKA} & \\
\eventE{3}  & &&\lockE{\LKB} & \\
\eventE{4}  & &&\lockE{\LKC} & \{\LKA, \LKB \} \\
\eventE{5}  & &&\unlockE{\LKC} & \\
\eventE{6}  & &&\unlockE{\LKB} & \\
\eventE{7}  & &&\unlockE{\LKA} & \\
\eventE{8}  & \lockE{\LKA}&& & \\
\eventE{9}  & \forkE{\thread{2}}&& &  \\
\eventE{10}  & &\lockE{\LKC}&&  \\
\eventE{11}  & &\lockE{\LKB}&& \{\LKC,  \HIGHLIGHT{\LKA} \} \\
\eventE{12}  & &\unlockE{\LKB}&&  \\
\eventE{13}  & &\unlockE{\LKC}& &  \\
\eventE{14}  & \joinE{\thread{2}}&&& \\
\eventE{15}  & \unlockE{\LKA}&&& \\

 \hline \ea{}
 \\
 \mbox{Lock dependencies}
 \\
 \LD{\thread{2}}{\LKB}{\{\LKC,\HIGHLIGHT{\LKA}\}}
 \ \ \ \LD{\thread{3}}{\LKC}{\{\LKA,\LKB \}}
 \eda
 \caption{False positive for deadlock prediction with per-thread lock
   set. Lock $\LKA$ not recognized.}
 \label{fig:ex2}
\end{subfigure}
\qquad
  \begin{subfigure}[t]{0.45\textwidth}
 \bda{l}

\ba{|l|l|l|l|l|}
\hline
\traceNew{FNExample} & \thread{1} & \thread{2} & \thread{3} & \StdLHeldT \\ \hline
\eventE{1}  & \forkE{\thread{3}}&& &\\
\eventE{2}  & \lockE{\LKB}&& &\\
\eventE{3}  & \forkE{\thread{2}}&& &  \\
\eventE{4}  & &\lockE{\LKA}& & \{ \HIGHLIGHT{\LKB} \}\\
\eventE{5}  & &\unlockE{\LKA}& &  \\
\eventE{6}  & \joinE{\thread{2}}&&&\\
\eventE{7}  & \unlockE{\LKB}&&&\\
\eventE{8}  & &&\lockE{\LKA}&\\
\eventE{9}  & &&\lockE{\LKB}& \{ \LKA \} \\
\eventE{10}  & &&\unlockE{\LKB}&\\
\eventE{11}  & &&\unlockE{\LKA}&\\

 \hline \ea{}
 \\
 \mbox{Lock dependencies}
 \\

\HIGHLIGHT{\LD{\thread{2}}{\LKA}{\{\LKB\}}}
 \ \ \ \LD{\thread{3}}{\LKB}{\{\LKA\}}

 \eda
 \caption{False negative for deadlock prediction with per-thread lock set. Lock $\LKB$ not recognized.}
 \label{fig:ex2_b}
  \end{subfigure}
  \caption{Observed executions of programs from~\cref{fig:pseudoDPFalsePositive}
          and~\cref{fig:pseudoDPFalseNegative} represented as traces. Grey items
    are added by our corrected lock set definition.}
\label{fig:ex2-container}
\end{figure}

\subsection{Per-Thread Lock Set yields False Positives and False Negatives}

It is well-known that lock set-based methods are prone to false positives.
Numerous approaches attempt to reduce the number of false positives. For instance,
\citet{conf/oopsla/KalhaugeP18} were the first to
employ an SMT solver to exhaustively explore reorderings to eliminate infeasible false positives.
However, condition \textbf{CS-Per-Thread}
of the standard lock set construction is
an inherent source of false positives as well as false negatives
as motivated by the examples in~\cref{fig:pseudoDPFalsePositive}
and~\cref{fig:pseudoDPFalseNegative}.
We investigate the issue in more detail
and consider two specific executions represented
as traces in~\cref{fig:ex2-container}.
As before we record the lock sets in some extra column.
We only show the lock sets for some events.
For space reasons, resulting lock dependencies are
put underneath traces.

Trace \traceRef{FPExample} results from the
program in \cref{fig:pseudoDPFalsePositive}.
We use labels $\thread{1}$, $\thread{2}$ and $\thread{3}$
to refer to threads \textsc{Main}, \textsc{A} and \textsc{B}.
The standard method finds
dependencies $\LD{\thread{2}}{\LKB}{\{\LKC \}}$ and
$\LD{\thread{3}}{\LKC}{\{\LKA, \LKB \}}$ (ignore the grey $\LKA$ in the figure for now, it does not belong to the standard lock set).
Based on these dependencies we can argue that
$\SET{e_4,e_{11}}$ satisfies \textbf{DP-Cyle} and \textbf{DP-Guard}.
Hence, $\SET{e_4,e_{11}}$ represents a deadlock pattern and we issue a deadlock warning.

This warning is a false positive because events $e_{10} - e_{13}$ in thread $\thread{2}$ and
events $e_{3} - e_6$ in thread $\thread{3}$ are guarded by the common lock $\LKA$.
However, lock $\LKA$ is not part of the dependency $\LD{\thread{2}}{\LKC}{\{\LKA \}}$
because the standard lock set construction does not consider locks
acquired in a different thread.
See condition \textbf{CS-Per-Thread}.

It turns out that in every valid reordering, the acquire event $e_8$ will take place
before events $e_{10} - e_{13}$ and the matching release event $e_{15}$ will take place afterwards.
This example illustrates the need for a corrected definition of lock sets
where locks can protect requests across thread boundaries.
Then, the lock dependency constructed in $\thread{2}$
includes lock~$\{\LKA\}$. Thus,
the false positive is eliminated
because the lock sets are no longer disjoint.
Hence, \textbf{DP-Guard} is violated.

The construction of deadlock patterns based on standard lock sets
also leads to false negatives.
Consider trace \traceRef{FNExample} resulting from
the program in \cref{fig:pseudoDPFalseNegative}.
The program contains a  deadlock, but
standard lock sets only yield the dependency
$\LD{\thread{3}}{\LKB}{\{\LKA\}}$.
Hence, there is no deadlock pattern and no deadlock warning is issued.
The corrected lock set construction
adds lock $\LKB$ to the lock set of event~$e_4$.
This then results in the dependency
$\LD{\thread{2}}{\LKA}{\{\LKB\}}$ which gives rise to a cyclic dependency
with $\LD{\thread{3}}{\LKB}{\{\LKA\}}$ where
the underlying lock sets are disjoint.
Thus, $\SET{e_4,e_9}$ satisfies \textbf{DP-Cyle} and \textbf{DP-Guard}
and we correctly issue that there is a deadlock.

\subsection{Beyond Per-Thread Lock Sets}

We correct the standard lock set definition
by dropping condition \textbf{CS-Per-Thread}.
Without condition \textbf{CS-Per-Thread},
the check \emph{$e$ is enclosed by ~$\lockE{l}$ and~$\unlockE{l}$}
in condition~\textbf{CS-Enclosed} can no longer be carried
out by considering the textual order among events.
To ensure that event~$e$ is protected by lock~$l$,
we demand that under \emph{all}
alternative scheduling of events, and for each schedule,
event~$e$ is enclosed by acquire event~$\lockE{l}$
and release event~$\unlockE{l}$ with no other acquires and releases
on~$l$ in between.
This then allows to discover locks that protect events across thread boundaries.
Thus, the false positive and false negative in the above examples can be removed.

As there can be exponentially many alternative schedulings,
the above corrected lock set definition will not lead to a practical algorithm.
To obtain an efficient algorithm, we approximate the set of alternative
schedulings via partial orders derived from the trace.
This is something well-explored in the data race prediction setting~\cite{lamport1978time,Smaragdakis:2012:SPR:2103621.2103702,Roemer:2018:HUS:3296979.3192385,conf/pldi/KiniMV17,10.1145/3360605,conf/mplr/SulzmannS20}.
Given some partial order~$<$ among events in the trace,
lock~$l$ is in $e$'s lock set if
$\lockE{l} < e < \unlockE{l}$.

In the data race setting, partial orders usually
\emph{overapproximate}.
If two events $e$ and $f$ are ordered
this does not necessarily mean that $f$ cannot appear before $e$
in some alternative schedule.
But if $e$ and $f$ are unordered then we can guarantee
that there must be an alternative schedule
where $e$ and $f$ appear right next to each other.
This then satisfies the condition for a data race
and ensures that only true positive data races are predicted.

In our setting, we require partial orders that \emph{underapproximate}.
If $e < f$ then we can guarantee that $e$ must appear before $f$
in any alternative schedule of events.
But if $e$ and $f$ are unordered it could be the case that
$e$ always appears before $f$.
Underapproximation guarantees that
if $\lockE{l} < e < \unlockE{l}$,
then~$e$ is is enclosed by~$\lockE{l}$ and~$\unlockE{l}$
under all alternative schedules.

The most simple underapproximating partial order is
the \emph{thread-order} relation.
We define $e \toLt[T] f$ if $e$ appears before $f$ in
trace~$T$ and $e$ and $f$ are in the same thread.
Effectively, this then yields the standard per-thread
lock set construction.
But there is ample space for further underapproximating partial orders.
For example, the \emph{fork-join-order} relation~$\fjLTSym$
that extends~$\toLTSym$ by including fork-join dependencies.

We will consider a number of different partial orders
suitable for our purposes (\cref{sec:partial-order})
and show how to effectively compute lock sets and deadlock patterns
based on these partial orders (\cref{sec:general-lockset-algo}).
We integrate the various lock set constructions into the state of the art
deadlock predictor \SPDOffline~\cite{conf/pldi/TuncMPV23}
(\cref{sec:spd-ext}).
The resulting extensions of \SPDOffline\ remain sound (no false positives),
become more complete (fewer false negatives) and performance is not affected
(\cref{sec:experiments}).

\section{Preliminaries}
\label{sec:prelim}

We introduce some standard notations, following
related work in predictive analyses of concurrent programs~\cite{sen2005detecting,Smaragdakis:2012:SPR:2103621.2103702,conf/pldi/TuncMPV23}.

\subsection{Events and Traces}
\label{sec:prelim:trace}

We consider dynamic analyses which observe the behavior
of a concurrent program represented as a trace.
A trace is a list of events reflecting a single execution of a concurrent program under the sequential consistency memory model~\cite{Adve:1996:SMC:619013.620590}.
Each event is connected to some concurrency primitive
such as acquire/release of locks (mutexes) and writing/reading of shared variables.
We use $\vLock[1],\vLock[2],\ldots$ and $x,y,z$ for lock and shared variables, respectively.

\begin{definition}[Events and Traces]
  \label{def:run-time-traces-events}
  \bda{@{}l@{}c@{}lll@{}c@{}llr@{}}
    \alpha,\beta,\delta & {} ::= {} & 1 \mid 2 \mid \ldots & \text{(unique event ids)}
    & \vEvt & {} ::= {} & (\alpha,\vThd,op) & \text{(events)}
    \\ t,s,u & {} ::= {} & \tid{1} \mid \tid{2} \mid \ldots & \text{(thread ids)}
    & \vTr & {} ::= {} & [] \mid e : T & \text{(traces)}
    \\ op & {} ::= {} & \readE{x} \mid \writeE{x} \mid \reqE{\vLock} \mid \acqE{\vLock} \mid \relE{\vLock}
    \span\span\span\span
    & \text{(operations)}
  \eda
\end{definition}

\noindent
We write $[o_1,\dots,o_n]$
for a list of objects as a shorthand of $o_1:\dots:o_n:[]$ and use the
operator  ``${}\conc{}$'' for list concatenation.

An event~$\vEvt$ is represented by a triple~$(\alpha,\vThd,op)$
where
$\alpha$ is a unique event identifier,
$op$ is an operation,
and
$\vThd$ is the id of the thread in which the operation took place.
The main thread has thread id~$\tid{1}$.
The unique event identifier unambiguously identifies events under trace reordering.
We often omit identifier and/or thread when denoting events, and omit parentheses if only the operation remains, writing $\vEvt = (\vThd,op)$ or $\vEvt = op$ instead of $\vEvt = (\alpha,\vThd,op)$.

The operations~$\readE{x}$ and~$\writeE{x}$ denote reading of and writing to a shared variable~$x$, respectively.
Operations~$\acqE{\vLock}$ and~$\relE{\vLock}$ denote acquiring and releasing a lock~$\vLock$, respectively.
We also include lock requests~$\reqE{\vLock}$ to denote the (possibly unfulfillable) attempt at acquiring a lock~$\vLock$, following, e.g.,~\cite{conf/oopsla/KalhaugeP18}.
For brevity, we omit operations $\forkE{t}$ and $\joinE{t}$.

Our tabular notation for traces has one column per thread. The events
for a thread are lined up in the thread's column and the trace
position corresponds to the row number.
See trace \traceRef{FNExample} in \cref{fig:ex_mt3b}.
We model fork and join events via shared variable operations.
For example, events $[e_2,e_3]$ emulate $\forkE{\thread{2}}$
and events $[e_7,e_8]$ emulate $\joinE{\thread{2}}$.
For brevity, we do not protect the read/write operations by some locks.
Acquires $e_5$ and $e_{12}$ are preceded by some lock request events.
For brevity, we omit lock request events for acquires $e_1$ and $e_{10}$.

Notation~$\thd(\vEvt)$ extracts the thread id from an event.
We define~$\lock(e) = \vLock$
if $e = (t,\lockE{\vLock}$, $e=(t,\unlockE{\vLock})$ or $e=(t,\reqE{\vLock})$.
We write $\vEvt \in \vTr$ to indicate that $\vTr = [\vEvt[1],\ldots,\vEvt[n]]$ and
for some $k\in\{1,...,n\}$ we find that $\vEvt = \vEvt[k]$, defining also $\pos[\vTr](\vEvt) = k$.
The set of events in a trace is then $\evts(\vTr) = \{ \vEvt \mid \vEvt \in \vTr \}$, and the set of thread ids in a trace is $\thds(\vTr) = \{ \thd(\vEvt) \mid \vEvt \in \vTr \}$.
For trace~$\vTr$ and events~$\vEvtA,\vEvtB \in \evts(\vTr)$, we define trace order: $\vEvtA \TrLt[\vTr] \vEvtB$ if~$\pos[\vTr](\vEvtA) < \pos[\vTr](\vEvtB)$.

\subsection{Well Formedness}
\label{sec:prelim:wf}

Traces must be well formed, following the standard sequential consistency conditions for concurrent objects \cite{Adve:1996:SMC:619013.620590,Said:2011:GDR:1986308.1986334,Huang:2014:MSP:2666356.2594315}.

\begin{definition}[Well Formedness]
  \label{def:WF}
  Trace~$\vTr$ is \emph{well formed} if all the following conditions are satisfied:
  \begin{description}
    \item[(WF-Acq)]
      For every~$\vAcq = (\vThd,\acqE{\vLock}), \vAcq' = (\vThd',\acqE{\vLock}) \in \vTr$ where~$\vAcq \TrLt[\vTr] \vAcq'$, there exists~$\vRel = (\vThd,\relE{\vLock}) \in \vTr$ such that~$\vAcq \TrLt[\vTr] \vRel \TrLt[\vTr] \vAcq'$.

    \item[(WF-Rel)]
      For every~$\vRel = (\vThd,\relE{\vLock}) \in \vTr$, there exists~$\vAcq = (\vThd,\acqE{\vLock}) \in \vTr$ such that~$\vAcq \TrLt[\vTr] \vRel$ and there is no~$\vRel' = (\vThd',\relE{\vLock}) \in \vTr$ with $\vAcq \TrLt[\vTr] \vRel' \TrLt[\vTr] \vRel$.
      We define~$\rel[\vTr](\vAcq) = \vRel$ and~$\acq[\vTr](\vRel) = \vAcq$.

    \item[(WF-Req)]
      For every~$\vAcq = (\vThd,\acqE{\vLock}) \in \vTr$, there exists~$\vReq = (\vThd,\reqE{\vLock}) \in \vTr$ such that $\vReq \TrLt[\vTr] \vAcq$
      and there is no event $e \in \vTr$ such that $\vReq \TrLt[\vTr] e \TrLt[\vTr] \vAcq$ and $\thd(e) = \vThd$.
      For every~$\vReq = (\vThd,\reqE{\vLock}) \in \vTr$,
      if $e \in \vTr$ where $\vReq \TrLt[\vTr] e$  and $\thd(e) = \vThd$
      and there is no other $f \in \vTr$ such that
      $\vReq \TrLt[\vTr] f \TrLt[\vTr] e$  and $\thd(f) = \vThd$,
      then $e=(t,\lockE{\vLock})$.



  \end{description}
\end{definition}

\noindent
\textbf{WF-Acq} states that a previously acquired lock can only be acquired after it has been released.
Similarly, \textbf{WF-Rel} states that a lock can only be released after is has been acquired but not yet released.
Note that these conditions require matching acquires and releases to occur in the same thread.

\textbf{WF-Req} states that each acquire must be immediately preceded by a lock request.
Requests do not have to be fulfilled by a corresponding acquire event.
However, if there is no acquire that fulfills a request, no further events
can come after the request in that thread.
Request events are not strictly necessary but simplify up-coming
definitions that reason about `blocked' operations.
For brevity, we often omit requests from example traces.
For example, see \traceRef{FNExampleLW} in \cref{fig:ex_mt3b}.
For such cases we assume that requests precede each acquire implicitly.




For brevity, we often drop the \emph{well-formed} assumption
imposed on traces. All traces we consider here
are well-formed.

\subsection{Trace Reorderings}
\label{sec:prelim:crp}

Trace-based dynamic analyses aim to expose bugs based
on some program run represented as a trace~$T$.
Bugs such as deadlocks rarely show up directly.
Hence, we wish to \emph{predict} a deadlock
by considering some alternative scheduling of the events in $T$ that
lead to a deadlock.
Naively, we could consider all permutations of the original trace~$T$.
However, not all such reorderings are feasible in the sense of being reproducible by executing the program with a different schedule.
In addition to well formedness, a reordering must guarantee that (a)~program order and (b)~last writes are maintained.
A reordering maintains program order if, in any thread, the order of events is unchanged.
For guarantee~(b), note that every read observes some write on the same variable: the last preceding such write.
Last writes are then maintained if the write observed by any read is unchanged.
Guarantee~(b) is particularly important to ensure that the control flow of the traced program is unaffected by the reordering, e.g., when the read is used in a conditional statement.

Reorderings do not have to run to completion, so we consider reordered \emph{prefixes} of traces.

\begin{definition}[Correctly Reordered Prefix]
  \label{def:crp}
  The \emph{projection} of~$\vTr$ onto thread~$\vThd$, denoted $\proj(\vTr,\vThd)$, restricts~$\vTr$ to events~$\vEvt$ with $\thd(\vEvt) = \vThd$.
  That is,
  $\vEvt \in \proj(\vTr,\vThd)$ if and only if $\vEvt \in \vTr$ and $\thd(\vEvt) = \vThd$,
  and
  $\vEvtA \TrLt[\proj(\vTr,\vThd)] \vEvtB$ implies $\vEvtA \TrLt[\vTr] \vEvtB$.

  Take~$\vEvtA = \readE{x},\vEvtB = \writeE{x} \in \vTr$.
  We say that~$\vEvtB$ is the \emph{last write} of~$\vEvtA$ in~$\vTr$, denoted $\vEvtB = \lw[\vTr](\vEvtA)$, if $\vEvtB$ appears before~$\vEvtA$ with no other write on~$x$ in between.
  That is,
  $\vEvtB \TrLt[\vTr] \vEvtA$,
  and
  there is no~$\vEvtC = \writeE{x} \in \vTr$ such that $\vEvtB \TrLt[\vTr] \vEvtC \TrLt[\vTr] \vEvtA$.

  Trace~$\vTr'$ is a \emph{correctly reordered prefix} of~$\vTr$ if the following
  conditions are satisfied:
  \begin{description}
    \item[(CRP-WF)] $\vTr'$ is well formed and $\evts(\vTr') \subseteq \evts(\vTr)$.
    \item[(CRP-PO)] For every~$\vThd \in \thds(\vTr')$, $\proj(\vTr',\vThd)$ prefixes~$\proj(\vTr,\vThd)$.
    \item[(CRP-LW)] For every~$\vEvtA = \readE{x} \in \vTr'$ where $\vEvtB = \lw[\vTr](\vEvtA)$ we have that $\vEvtB = \lw[\vTr'](\vEvtA).$
  \end{description}
  We write $\crp(\vTr)$ to denote the set of correctly reordered prefixes of~$\vTr$.
\end{definition}

\noindent
Condition~\textbf{CRP-LW} follows the formulation in~\cite{Mathur:2018:HFR:3288538:3276515} and uses function~$\lw[]$ to guarantee that last writes are maintained under reordering.
There are other formulations~\cite{conf/pldi/TuncMPV23} that
make use of a "reads-form" function which is just a different name for
function~$\lw[]$.

\begin{figure}[t]
  \begin{subfigure}[t]{0.45\textwidth}
    \bda{l}
\ba{|l|l|l|l|}
\hline
\traceNew{FNExampleLW} & \thread{1} & \thread{2} & \thread{3}\\ \hline
\eventE{1}  & \lockE{\LKA}&&\\
\eventE{2}  & \writeE{\VA}&&\\
\eventE{3}  & &\readE{\VA}&\\
\eventE{4}  & &\reqLockE{\LKB}&\\
\eventE{5}  & &\lockE{\LKB}&\\
\eventE{6}  & &\unlockE{\LKB}&\\
\eventE{7}  & &\writeE{\VB}&\\
\eventE{8}  & \readE{\VB}&&\\
\eventE{9}  & \unlockE{\LKA}&&\\
\eventE{10}  & &&\lockE{\LKB}\\
\eventE{11}  & &&\reqLockE{\LKA}\\
\eventE{12}  & &&\lockE{\LKA}\\
\eventE{13}  & &&\unlockE{\LKA}\\
\eventE{14}  & &&\unlockE{\LKB}\\

 \hline \ea{}
    \eda
 \caption{Reformulation of trace \traceRef{FNExample} in \cref{fig:ex2_b}.}
 \label{fig:ex_mt3b}
  \end{subfigure}
      \qquad
  \begin{subfigure}[t]{0.4\textwidth}
    \bda{l}
\ba{|l|l|l|l|}
\hline
\traceNew{FNExampleLWCRP} & \thread{1} & \thread{2} & \thread{3}\\ \hline
\eventE{1}  & \lockE{\LKA}&&\\
\eventE{10}  & &&\lockE{\LKB}\\
\eventE{2}  & \writeE{\VA}&&\\
\eventE{3}  & &\readE{\VA}&\\
\eventE{4}  & &\reqLockE{\LKB}&\\
\eventE{11}  & &&\reqLockE{\LKA}\\
\hline \ea{}
    \eda
     \caption{Reordered prefix of \traceRef{FNExampleLW}.}
 \label{fig:ex_mt1b}
  \end{subfigure}

  \caption{Request events, fork-join via last-write and correct reorderings.}
\label{fig:ex_prelim}
\end{figure}

\begin{example}
Consider traces $\traceRef{FNExampleLWCRP} \in \crp(\traceRef{FNExampleLW})$ in \cref{fig:ex_prelim}.
The explicit request events $e_4$ and $e_{11}$ in \traceRef{FNExampleLWCRP} express
the fact that thread~$\thread{2}$ attempts to acquire lock~$l_2$
and thread~$\thread{3}$ attempts to acquire lock~$l_1$.
However, both requests cannot be fulfilled
because \traceRef{FNExampleLWCRP} extended with either $e_5$ or $e_{12}$ violates
lock semantics.
Further note that \traceRef{FNExampleLWCRP} cannot be extended
with $e_8$ either as this violates the last-write condition~\textbf{CRP-LW}.
\end{example}

\section{Critical Section and Lock Set}
\label{sec:lockset}

We consider a specific program run represented by some trace~$T$.
We assume that trace~$T$ is a well-formed trace with a matching release
event for each acquire. This assumption imposes no
restriction as we can always add missing release events at the
end of a trace.
In trace~$T$ we wish to identify critical sections that
protect events.
Informally, an event is part of a critical section if it is enclosed
by an acquire event and its matching release.
Recall the informal conditions~\textbf{LS-Enclosed} and~\textbf{LS-Match} form \cref{sec:overview}.
We present a trace-based definition capturing this requirement
by considering all alternative schedules
represented by the set of correctly reordered prefixes.

\begin{figure}

  \hspace{-.8cm}
  \begin{subfigure}[t]{0.3\textwidth}
    \bda{l}
\ba{|l|l|l|l|}
\hline
\traceNew{TraceCSLW} & \thread{1}    & \thread{2} & \thread{3}\\ \hline
 \eventE{1}  &  \lockEb{\LKA}  &&\\
\eventE{2}  & \writeEb{\VA}&&\\
\eventE{3}  & \writeEb{\VB}&&\\
\eventE{4}  & & \readEb{\VA}&\\
\eventE{5}  & && \readEb{\VB}\\
\eventE{6}  & & \writeEb{\VA} &\\
\eventE{7}  & && \writeEb{\VB}\\
\eventE{8}  & \readEb{\VA}&&\\
\eventE{9}  & \readEb{\VB}&&\\
\eventE{10}  & \unlockEb{\LKA}&&\\

 \hline \ea{}
    \eda
    \caption{} 
 \label{fig:ex-cs-lw}
  \end{subfigure}
  \qquad
  \begin{subfigure}[t]{0.23\textwidth}
    \bda{l}
\ba{|l|l|l|}
\hline
\traceNew{TraceCSRO} & \thread{1} & \thread{2}\\ \hline
\eventE{1}  & \lockE{\LKB}&\\
\eventE{2}  & \writeE{\VA}&\\
\eventE{3}  & \lockEb{\LKA}&\\
\eventE{4}  & \unlockEb{\LKB}&\\
\eventE{5}  & &\lockE{\LKB}\\
\eventE{6}  & & \readEb{\VA}\\
\eventE{7}  & & \unlockEb{\LKB}\\
\eventE{8}  & & \writeEb{\VB}\\
\eventE{9}  &   \readEb{\VB}&\\
\eventE{10}  &  \unlockEb{\LKA}&\\

 \hline \ea{}
    \eda
    \caption{} 
 \label{fig:ex-cs-ro}
  \end{subfigure}
  \qquad
  \begin{subfigure}[t]{0.3\textwidth}
    \bda{l}
\ba{|l|l|l|l|}
\hline
\traceNew{TraceCSROBYND} & \thread{1} & \thread{2} & \thread{3}\\ \hline
\eventE{1}  & \lockE{\LKB}&&\\
\eventE{2}  & \writeE{\VC}&&\\
\eventE{3}  & \lockEb{\LKA}&&\\
\eventE{4}  & \unlockEb{\LKB}&&\\
\eventE{5}  & &\lockE{\LKB}&\\
\eventE{6}  & &\writeE{\VB}&\\
\eventE{7}  & &&\readE{\VB}\\
\eventE{8}  & &&\readEb{\VC}\\
\eventE{9}  & &&\writeEb{\VA}\\
\eventE{10}  & &\readEb{\VA}&\\
\eventE{11}  & &\unlockEb{\LKB}&\\
\eventE{12}  & &\writeEb{\VA}&\\
\eventE{13}  & \readEb{\VA}&&\\
\eventE{14}  & \unlockEb{\LKA}&&\\

 \hline \ea{}
    \eda
 \caption{} 
 \label{fig:ex-cs-not}
\end{subfigure}

  \caption{Events that are part of a critical section involving lock~$\LKA$ are
         written in {\bf bold face}.}
\label{fig:ex-cs}
\end{figure}

\subsection{Critical Section}

We say event $f$ \emph{must be preceded}
by event~$e$ \emph{under all trace reorderings},
written $e \crpLt[T] f$, if $\forall T' \in \crp(T)$ where $f \in T'$
we have that $e \in T'$ and $e \TrLt[T'] f$.

\begin{definition}[Critical Section]
\label{def:cs}
  Suppose there are events
  $a = (t, \lockE{l}), r = (t, \unlockE{l}) \in T$ for
  some lock~$l$.
  We say that $e \in T$ is in the
  \emph{critical section enclosed by acquire $a$
    and release $r$ under all schedules},
  written
  $ e \in \GeneralCSect{T}{a,r}$
  if the following two conditions are met:

  \begin{description}
  \item[(CS-Enclosed)]  $a \crpLt[T] e$ and $e \crpLt[T] r$.

\item[(CS-Match)] There is no release event $r'= (t, \unlockE{l}) \in T$  and
       $T' \in \crp(T)$
     such that $a \TrLt[T'] r'$ and~$r' \TrLt[T'] e$.
  \end{description}

\end{definition}
Condition \textbf{CS-Enclosed}  states that event~$e$
must be preceded by acquire~$a$ and release~$r$ must be preceded by event~$e$.
This means that event~$e$ is enclosed
by the critical section represented by acquire~$a$ and release~$r$
and captures the intuitive understanding that event~$e$ is protected by lock~$l$.
Condition \textbf{CS-Match} guarantees that $r$ is the matching release for the acquire event $a$.

\subsection{Lock Set}

Based on the above characterization of critical section
we can define the notion of a lock set.
Lock $l$ is held by some event $e$ if the event is part of a critical section for lock~$l$.

\begin{definition}[Locks Held]
\label{def:lh}
  The \emph{lock set for an event $e\in T$ under all schedules} is defined by
  $\GeneralLHeld{T}{e} =
  \{ (l,t) \mid \exists a, r \in T \ \mbox{such that} \
        a = (t,\lockE{l}) \ \mbox{and} \
        e \in \GeneralCSect{T}{a,r} \}$.
\end{definition}

Our definitions correct earlier definitions
as we no longer impose the requirement
that $e$ belongs to the same thread as~$a$ and~$r$.
This also means that events from distinct threads can held the same lock.
Therefore, lock sets contain pairs $(l,t)$
where~$t$ refers to the thread  that acquired and released lock~$l$.

\cref{fig:ex-cs} shows some example traces.
For brevity, we omit request events as they do not matter here.
For each trace we consider critical sections
that are formed by acquiring and releasing lock~$\LKA$.

\begin{example}
\label{ex:lw-but-not-to}
  For trace~\traceRef{TraceCSLW} we find
  that events~$e_2$,...,$e_9$ are in critical
  section $\GeneralCSect{\traceRef{TraceCSLW}}{e_1,e_{10}}$.
  For example, $e_4 \in \GeneralCSect{\traceRef{TraceCSLW}}{e_1,e_{10}}$
  because (1) due to the last-write dependency between~$e_2$ and $e_4$ we have
  that $e_4$ must be preceded by acquire~$e_1$, and
  (2) for similar reasons we find that
  release~$e_{10}$ must be preceded by~$e_4$.
  Hence, $\{ (\LKA, \thread{1}) \} = \GeneralLHeld{\traceRef{TraceCSLW}}{e_4}$
  and also $\{ (\LKA, \thread{1}) \} = \GeneralLHeld{\traceRef{TraceCSLW}}{e_5}$.
\end{example}

\begin{example}
  In case of \traceRef{TraceCSRO} we find that
  $e_5 \not \in \GeneralCSect{\traceRef{TraceCSRO}}{e_3,e_{10}}$
  because $[e_5] \in \crp(\traceRef{TraceCSRO})$.
  This shows that acquire~$e_3$ does not always precede event~$e_5$
  and therefore~$e_5$ is not protected by lock~$\LKA$.
  However, event $e_6$ is protected by lock~$\LKA$.
  Consider $T' \in \crp(\traceRef{TraceCSRO})$ where $e_6$ in $T'$.
  We find that $e_5 \in T'$ (see \textbf{CRP-PO} in \cref{def:crp}),
  $e_2 \in T'$ (see \textbf{CRP-LW}) and
  $e_1 \in T'$ (see \textbf{CRP-PO}).
  There are two acquires~$e_1$ and $e_5$ on the same lock in~$T'$.
  Due to the last-write dependency between $e_2$ and $e_6$,
  acquire~$e_1$ must occur before acquire~$e_5$.
  To guarantee that $T'$ is well formed the release of the earlier
  acquire must be included. Hence, $e_4 \in T'$.
  But then $e_3 \in T'$ and this shows that
  (1) $e_6$ must be preceded by acquire~$e_3$.
  Further reasoning steps show that (2) release~$e_{10}$ must be preceded
  by~$e_6$. From (1) and (2) we conclude that
  $e_6 \in \GeneralCSect{\traceRef{TraceCSRO}}{e_3,e_{10}}$
  and $\{(\LKA,\thread{1}) \} = \GeneralLHeld{\traceRef{TraceCSRO}}{e_6}$.
\end{example}

\begin{example}
  \label{ex:lh-but-not-ro}
  Consider trace~\traceRef{TraceCSROBYND}.
  Events $e_5, e_6, e_7$ are not protected by lock~$\LKA$
  because $[e_5, e_6, e_7] \in \crp(\traceRef{TraceCSROBYND})$
  and this shows that acquire~$e_3$ does not precede any of these events.
  On the other hand, we find that
  $e_8 \in\GeneralLHeld{\traceRef{TraceCSROBYND}}{e_4}$
  for the following reason.
  For any $T' \in \crp(\traceRef{TraceCSROBYND})$
  where $e_8 \in T'$ we have that $e_1,e_2,e_5,e_6,e_7 \in T'$
  due to conditions \textbf{CRP-PO} and \textbf{CRP-LW}.
  There are two acquires~$e_1$ and $e_5$ on the same lock in~$T'$.
  Acquire~$e_1$ must occur before acquire~$e_5$, as otherwise
  the last-write dependency between $e_2$ and $e_8$ will be violated.
  This also means that we must include the matching release~$e_4$ of acquire~$e_1$
  in~$T'$.
  Then, $e_3 \in T'$ where $e_3$ occurs before~$e_8$.
  Hence, (1) $e_8$ is always preceded by~$e_3$.
  Further reasoning steps show that (2) release~$e_{14}$
  must be preceded by~$e_8$.
  From (1) and (2) we conclude that
  $\{(\LKA,\thread{1}) \} = \GeneralLHeld{\traceRef{TraceCSROBYND}}{e_8}$.
\end{example}


\section{Lock Set-Based Deadlock Prediction}
\label{sec:deadlock}

We revisit a standard method for dynamic deadlock prediction.
For a specific program run represented by some trace~$T$
we calculate all lock (set) dependencies in~$T$ and
check for cycles among them.
A \emph{lock dependency} represents the situation
that a request event holds a certain set of locks.
A \emph{cyclic chain of lock dependencies}
where the underlying lock sets are disjoint
is referred to as a \emph{deadlock pattern}.

Definitions of the above notions occur in many forms in the literature.
Our formulation below largely follows~\cite{conf/pldi/TuncMPV23}.
We generalize these definitions and make them parametric
in terms of the lock set function~$L_T$
that maps the set $\evts(T)$ of events in~$T$
to sets of pairs consisting of lock variables and thread ids.
In this section, the only specific instance of~$L_T$ we consider
is $\GeneralLHeldd{T}$ from \cref{def:lh}.
In later sections, we introduce sound approximations
of $\GeneralLHeldd{T}$ that can be efficiently computed.

\subsection{Deadlock Pattern}

\begin{definition}[Lock Dependency]
  Let
  $q = (t, \reqE{l}) \in T$ be a request event
  for lock variable $l$.
  We write $\LDInduced{L_T} (q) = \LD{t}{l}{L_T (q)}$ for the
  \emph{lock dependency of $q$ induced by $L_T$}.
\end{definition}

\begin{example}
  \label{ex:pred-dl}
  Consider trace \traceRef{FNExampleLW} in \cref{fig:ex_mt3b}.
  We find that lock dependencies
  $\LDInduced{\GeneralLHeldd{\traceRef{FNExampleLW}}} (e_4)
   = \LD{\thread{2}}{\LKB}{\GeneralLHeld{\traceRef{FNExampleLW}}{e_4}}$ and
  $\LDInduced{\GeneralLHeldd{\traceRef{FNExampleLW}}} (e_{11})
   = \LD{\thread{3}}{\LKA}{\GeneralLHeld{\traceRef{FNExampleLW}}{e_{11}}}$
   where $\GeneralLHeld{\traceRef{FNExampleLW}}{e_4} =  \{(\LKA,\thread{1})\}$
   and $\GeneralLHeld{\traceRef{FNExampleLW}}{e_{11}} =  \{(\LKB,\thread{2})\}$.
   Notice that $\LKA$ occurs in $\GeneralLHeld{\traceRef{FNExampleLW}}{e_4}$
   and $\LKB$ occurs in $\GeneralLHeld{\traceRef{FNExampleLW}}{e_{11}}$.
   Furthermore, lock sets have no common lock.
   Hence, we say that $\LDInduced{\GeneralLHeldd{\traceRef{FNExampleLW}}} (e_4)$
   and $\LDInduced{\GeneralLHeldd{\traceRef{FNExampleLW}}} (e_{11})$
   form a cyclic chain of lock dependencies (aka deadlock pattern).
\end{example}

To formalize deadlock patterns we introduce some notation.
We write $l \in L_T(e)$ as short-hand for $(l,t) \in L_T(e)$
for some $t$. We use this short-hand when checking for (cyclic) lock dependencies
where we ignore the thread id attached to each lock.
However, when checking that lock sets have no common lock, thread ids matter.
The intersection among lock sets only includes lock variables
with different thread ids,
that is, locks that have been acquired in different threads.
\begin{align*}
  M \indexedcap N & := \{ l \mid (l, s) \in M, (l, t) \in N, s \ne t \}
\end{align*}

\begin{definition}[Deadlock Pattern]
  \label{def:deadlock-pattern}
  Let $n>1$ and $q_1,\dots ,q_n \in T$ be request events
  for lock variables $l_1,...,l_n$, respectively.
  We say that the set of requests
  $\SET{q_1,\dots ,q_n}$ represents a \emph{deadlock pattern}
  if the following conditions hold:
  \begin{description}
  \item[(DP-Thread)] $\thdOf(q_i) \ne \thdOf(q_j)$, for all $i \ne j$.
  \item[(DP-Cycle)] For every $1 \leq i \leq n$,
                $q_i =\reqE{l_i}$ and $l_i \in L_T(q_{(i \mod n) + 1})$.
  \item[(DP-Guard)] $L_T(q_i) \indexedcap L_T(q_j) = \emptyset$, for all $i \ne j$.

  \end{description}
  We use notation $\DP{L_T}{\SET{q_1,\dots ,q_n}}$
  to indicate that $\SET{q_1,\dots ,q_n}$ satisfies the above conditions.

We define the set of deadlock patterns for trace $T$ induced by $L_T$ as follows:
$\AllDP{L_T} = \{ \SET{q_1,\dots ,q_n} \mid \DP{L_T}{\SET{q_1,\dots ,q_n}} \}$.
\end{definition}

Each deadlock pattern $\DP{L_T}{\SET{q_1,\dots ,q_n}}$ corresponds to
a cyclic chain of lock dependencies $\LDInduced{L_T} (q_1), ..., \LDInduced{L_T} (q_n)$.
A deadlock pattern is predictable if there is reordering of the trace
where execution is stuck.

\subsection{Predictable Deadlocks are Stuck Deadlock Patterns}

Earlier formulations of predictable deadlocks~\cite{conf/pldi/TuncMPV23}
rely on a notion of \emph{enabled} events.
An event $e$ is said to be enabled in some correctly reordered prefix $T'$ if
(1) extending $e$ with $T'$ would break any of the well-formedness conditions, and
(2) $T'$ contains all events $f$ that are in the same thread as $e$ but occur before $e$.
In essence, enabled events are `blocked' acquires that arise in some
cyclic lock dependency chain.
In our formulation of deadlock patterns we use request
instead of acquire events.
Hence, we we introduce the notion of \emph{final} events.

\begin{definition}[Final Event]
  Given trace~$T$,
  we say $e$ is \emph{final} in~$T$ if there is no~$f \in T$ such that
  $e \TrLt[T] f$ and $\thd(e) = \thd(f)$.
\end{definition}

Request events are final and their corresponding acquires are enabled.
Here is our reformulation of predictable deadlocks~\cite{conf/pldi/TuncMPV23}
using final instead of enabled events.

\begin{definition}[Predictable Deadlock]
  \label{def:pred-deadlock}
  Given $\SET{q_1,\dots ,q_n} \in \AllDP{L_T}$,
  we say $\SET{q_1,\dots ,q_n}$ is a \emph{predictable deadlock} in~$T$ if
  there exists \emph{witness}~$T' \in \crp(T)$
  such that each of $q_1,...,q_n$ are final in~$T'$.

  We use notation $\PD{L_T}{\SET{q_1,\dots ,q_n}}$ to indicate
that $\SET{q_1,\dots ,q_n}$ satisfies the above conditions.
\end{definition}
We define the set of predictable deadlocks for trace $T$ induced by $L$ as follows:
$\AllPD{L_T} = \{ \SET{q_1,\dots ,q_n} \mid \PD{L_T}{\SET{q_1,\dots ,q_n}} \}$.

An implicit assumption of the above definition is the fact that the witness trace~$T'$
is \emph{stuck} and cannot be extended with the acquire~$a_i$ that
fulfills the lock request~$q_i$.
The key for this assumption is
condition \textbf{DP-Cycle} (see \cref{def:deadlock-pattern})
which states that each lock $\lock(q_i)$ is held by~$q_{(i \mod n) + 1}$.
Extending~$T'$ with $a_i$ therefore violates lock semantics.

We prove \emph{stuckness} for predictable
deadlocks built via~$\GeneralLHeldd{T}$ (\cref{def:lh}).

\begin{restatable}{lemma}{stuckness}
\label{le:stuckness}
  Let $\SET{q_1,\dots ,q_n} \in \AllPD{\GeneralLHeldd{T}}$
  where $T'$ is the witness for $\SET{q_1,\dots ,q_n}$.
  Let $a_i$ be the acquire that fulfills the lock request~$q_i$.
  Then, there is no $T''$ such that $T' \conc T'' \conc [a_i]$ is well-formed.
\end{restatable}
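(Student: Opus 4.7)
The plan is a proof by contradiction: assume some $T''$ exists so that $\hat{T} := T' \conc T'' \conc [a_i]$ is a well-formed continuation (inheriting \condRef{CRP-PO} and \condRef{CRP-LW} from its role as an extension of the correctly reordered prefix $T'$). I will show that $\hat T$ would then have to contain $a_i$ strictly before itself.

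First I unfold the predictable-deadlock hypothesis. By \condRef{DP-Cycle} together with \cref{def:lh}, for each $k \in \{1,\dots,n\}$ there exist events $a'_k = (s_k, \lockE{l_k})$ and $r'_k = (s_k, \unlockE{l_k}) \in T$ with $q_{k'} \in \GeneralCSect{T}{a'_k, r'_k}$, where $k' := (k \mod n)+1$. Then \condRef{CS-Enclosed} gives $a'_k \crpLt[T] q_{k'} \crpLt[T] r'_k$, and \condRef{CS-Match} excludes any other release of $l_k$ in between under any CRP. Because $T'$ is a CRP containing $q_{k'}$, we obtain $a'_k \in T'$ with $a'_k \TrLt[T'] q_{k'}$, and by \cref{def:pred-deadlock} each $q_k$ is final in $T'$.

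The heart of the proof is the following implication, to be established for every $k$: if $a_k \in \hat T$, then $a_{k'} \in \hat T$ with $a_{k'} \TrLt[\hat T] a_k$. Granted this, iterating around the $n$-cycle of indices forces $a_i \TrLt[\hat T] a_i$, contradicting irreflexivity of $\TrLt[\hat T]$. To prove the implication, first apply \condRef{WF-Acq} to the two acquires $a'_k, a_k \in \hat T$ on lock $l_k$: there must be a release of $l_k$ in thread $s_k$ strictly between them, and per-thread ordering of the lock operations in $s_k$ pins this release to $r'_k$. Hence $r'_k \in \hat T$ with $a'_k \TrLt[\hat T] r'_k \TrLt[\hat T] a_k$. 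Next, exploit the CS-Enclosed relation $q_{k'} \crpLt[T] r'_k$: the concrete chain of \condRef{CRP-PO} and \condRef{CRP-LW} links realizing this precedence in $\hat T$ must, since $r'_k$ lies in thread $s_k$ and $q_{k'}$ in thread $t_{k'}$, touch thread $t_{k'}$ at some event strictly after $q_{k'}$ in $T$; by \condRef{WF-Req}, the first event of thread $t_{k'}$ following $q_{k'}$ is the matching acquire $a_{k'}$, so $a_{k'} \in \hat T$ with $a_{k'} \TrLt[\hat T] r'_k \TrLt[\hat T] a_k$.

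The principal obstacle is the last step above, which is exactly what the corrected (multi-thread) lock set demands. In the degenerate case $s_k = t_{k'}$ it is immediate from \condRef{WF-Req}: $r'_k$ and $q_{k'}$ live in the same thread, so reaching $r'_k$ after $q_{k'}$ unavoidably goes through $a_{k'}$. In the genuinely multi-thread case $s_k \neq t_{k'}$, the dependency is mediated by last-write or lock-matching synchronization, and the cleanest formalization would be via an auxiliary lemma characterizing $\crpLt[T]$ as the transitive closure of primitive PO, LW, and lock-matching links; the witnessing chain for $q_{k'} \crpLt[T] r'_k$ becomes explicit and can be traversed event by event to identify a crossing through $t_{k'}$ strictly past $q_{k'}$.
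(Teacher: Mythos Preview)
Your cyclic-implication strategy---show that $a_k \in \hat T$ forces $a_{k'} \in \hat T$ strictly earlier, then chase the cycle to $a_i \TrLt[\hat T] a_i$---is a clean way to handle general $n$, whereas the paper only spells out $n=2$ via a nested case split. The first half of your key implication (getting $r'_k \in \hat T$ between $a'_k$ and $a_k$) is fine. The gap is the second half: you try to extract $a_{k'} \in \hat T$ from $q_{k'} \crpLt[T] r'_k$ by positing an auxiliary lemma that $\crpLt[T]$ is the transitive closure of \textbf{PO}, \textbf{LW}, and ``lock-matching'' links, and then following the resulting chain through thread $t_{k'}$. That lemma is false. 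The paper's own \cref{ex:lh-but-not-ro} exhibits $\crpLt$ relations that are not captured by any of the syntactic partial orders it considers (including $\roLt$, which already incorporates a lock-matching rule), so there is no finite link vocabulary whose transitive closure equals $\crpLt$. Without the lemma, the chain you want to traverse does not exist, and the step from $q_{k'} \crpLt[T] r'_k$ to $a_{k'} \in \hat T$ is unjustified.

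The paper closes this gap differently, and the idea plugs straight into your framework. Instead of analyzing $\crpLt$ syntactically, it argues by \emph{swapping}: take the prefix of $\hat T$ up to and including $r'_k$ (still a correctly reordered prefix), and suppose $a_{k'}$ is absent from it. Then $q_{k'}$ is the last event of thread $t_{k'}$ in that prefix; being a mere request, $q_{k'}$ can be pushed past every later event---in particular past $r'_k$---without violating well-formedness, program order, or last-write. The result is a correctly reordered prefix in which $r'_k$ precedes $q_{k'}$, directly contradicting $q_{k'} \crpLt[T] r'_k$. Hence $a_{k'}$ must already lie in that prefix, giving $a_{k'} \TrLt[\hat T] r'_k \TrLt[\hat T] a_k$ as you need. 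This swap argument is exactly what the paper uses in its case analysis; lifting it into your inductive step would make your proof go through.
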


\begin{figure}[t]
  \begin{subfigure}[t]{0.4\textwidth}
    \bda{l}
\ba{|l|l|l|l|}
\hline
\traceNew{TraceClassicGuard} & \thread{1} & \thread{2} & \thread{3}\\ \hline
\eventE{1}  & \lockE{\LKA}&&\\
\eventE{2}  & \writeE{\VA}&&\\
\eventE{3}  & &\readE{\VA}&\\
\eventE{4}  & &\reqLockE{\LKB}&\\
\eventE{5}  & &\lockE{\LKB}&\\
\eventE{6}  & &\unlockE{\LKB}&\\
\eventE{7}  & &\writeE{\VB}&\\
\eventE{8}  & \readE{\VB}&&\\
\eventE{9}  & \unlockE{\LKA}&&\\
\eventE{10}  & \writeE{\VC}&&\\
\eventE{11}  & &&\readE{\VC}\\
\eventE{12}  & &&\lockE{\LKB}\\
\eventE{13}  & &&\reqLockE{\LKA}\\
\eventE{14}  & &&\lockE{\LKA}\\
\eventE{15}  & &&\unlockE{\LKA}\\
\eventE{16}  & &&\unlockE{\LKB}\\

\hline \ea{}

\eda
      \caption{Not a predictable deadlock.}
 \label{fig:extra-dp-fp}
    \end{subfigure}
  \qquad
    \begin{subfigure}[t]{0.4\textwidth}
      \bda{lcl}
\ba{|l|l|l|l|}
\hline
\traceNew{FalseNegativeMulti} & \thread{1} & \thread{2} & \thread{3}\\ \hline
\eventE{1}  & \lockE{\LKC}&&\\
\eventE{2}  & \writeE{\VA}&&\\
\eventE{3}  & &\readE{\VA}&\\
\eventE{4}  & &\lockE{\LKA}&\\
\eventE{5}  & &\reqLockE{\LKB}&\\
\eventE{6}  & &\lockE{\LKB}&\\
\eventE{7}  & &\unlockE{\LKB}&\\
\eventE{8}  & &\unlockE{\LKA}&\\
\eventE{9}  & &\writeE{\VB}& \\
\eventE{10}  &  &  &\readE{\VA}\\
\eventE{11}  & &&\lockE{\LKB}\\
\eventE{12}  & &&\reqLockE{\LKA}\\
\eventE{13}  & &&\lockE{\LKA}\\
\eventE{14}  & &&\unlockE{\LKA}\\
\eventE{15}  & &&\unlockE{\LKB}\\
\eventE{16}  & &&\writeE{\VC}\\
\eventE{17}  & \readE{\VB}&&\\
\eventE{18}  & \readE{\VC}&&\\
\eventE{19}  & \unlockE{\LKC}&&\\

\hline \ea{}

\eda
      \caption{Predictable deadlock.}
 \label{fig:multi-fn}
   \end{subfigure}

  \caption{Further deadlock pattern examples.}
  \label{fig:predictable-dl-comparison}
  \label{fig:further-dp-examples}
\end{figure}

\subsection{Examples and Intermediate Summary}

\begin{example}
  For trace \traceRef{FNExampleLW} in \cref{fig:ex_mt3b}
  we find that $\SET{e_4,e_{11}}$ form a deadlock pattern.
  This deadlock pattern is a predictable deadlock as shown by
  the witness~$\traceRef{FNExampleLWCRP}$ in \cref{fig:ex_mt1b}.
  We observe that there is no well-formed extension of trace~$\traceRef{FNExampleLWCRP}$ that includes either~$e_5$ or~$e_{12}$.
\end{example}

\begin{example}
  Trace \traceRef{TraceClassicGuard} in \cref{fig:extra-dp-fp}
  similar to trace \traceRef{FNExampleLW}.
  We again find a cyclic lock dependency represented
  by the deadlock pattern $\SET{e_4, e_{13}}$.
  In this case, there is no predicable deadlock.
  Due to the last-write dependency among $e_{10}$ and $e_{11}$
  there cannot be a witness where $e_4$ and $e_{13}$ are final.
\end{example}

\begin{example}
  Trace \traceRef{FalseNegativeMulti} in \cref{fig:multi-fn} contains
  a predictable deadlock and highlights that intersection
  among lock sets takes into account thread ids.
  We find that
  $\GeneralLHeld{\traceRef{FalseNegativeMulti}}{e_5} = \{ (\LKC,\thread{1}), (\LKA,\thread{2}) \}$
  and
  $\GeneralLHeld{\traceRef{FalseNegativeMulti}}{e_{12}} = \{ (\LKC,\thread{1}), (\LKB,\thread{3}) \}$.
  Both lock sets have a common lock~$\LKC$ but this lock
  is connected to the same thread.
  Hence, we do not treat lock~$\LKC$ as a guard lock
  and ignore this lock when building the intersection
  See \textbf{DP-Guard} in \cref{def:deadlock-pattern}.
  Indeed, we find that deadlock pattern $\SET{e_5, e_{12}}$
  represents a predicable lock as shown by
  the witness $[e_1,e_2,e_3,e_{10},e_4,e_{11},e_5,e_{12}]$.
\end{example}

Based on the corrected lock set definition,
we are able to discover a larger class of predictable deadlocks.
For example, the predictable deadlock in trace
\traceRef{FNExampleLW} in \cref{fig:ex_mt3b}
is not detected by formulations that
employ per-thread lock sets.
We can also state that all predictable deadlocks detected
by using per-thread lock sets are also detected by
our more general method. The formal result (\cref{le:pd-to-lw-ro})
is presented in the upcoming section.
In terms of complexity, there are no changes.
Dynamic deadlock prediction remains NP-hard
as stated in~\cite{conf/pldi/TuncMPV23},
regardless if we assume the per-thread or correct lock set construction.
The same arguments used in~\cite{conf/pldi/TuncMPV23}
for the per-thread lock set case also apply to our more general setting.

While being intractable in theory, earlier works show
that dynamic deadlock prediction works well in practice.
This is due to the fact that the per-thread lock set construction
is easy to compute and an exponential blow-up of cyclic lock dependencies
is rare for actual programs.
Our trace-based lock set construction is more involved because we need
to consider all (possibly exponential) trace reorderings.
The upcoming section shows how to effectively approximate
the trace-based lock set construction via partial order relations.
This then leads to an implementation (see \cref{sec:implementation}) that retains
the efficiency of the per-thread lock set construction
but with increased precision (see \cref{sec:experiments}).

\section{Partial Order-Based Critical Section and Lock Set}
\label{sec:partial-order}

We give a characterization of the notions of a critical section
and lock set in terms of partial order relations
under which the set of predictable deadlocks
as described by \cref{def:pred-deadlock} can be soundly approximated.
We review various partial orders that can be found in the literature
and discuss if they are suitable for our purposes.

\subsection{Must Happen Before Approximation}

\begin{definition}[Partial Order]
Given some trace $T$, a \emph{strict partial order}
is a binary relation among events $\evts(T)$ which is
irreflexive, antisymmetric and transitive.
\end{definition}

We often say \emph{partial order} for short and
use symbol~$\pLt[T]$ to refer to some partial order
for some trace~$T$.

The next definitions
rephrase \cref{def:cs} and \cref{def:lh}
in terms of some partial order~$\pLt[T]$.
To make a clear distinction, we will
prefix labels with `\PO' in the upcoming definitions.

\begin{definition}[\PO-Critical Section]
  \label{def:pcs}
  Let~$\pLt[T]$ be some partial order.
  Suppose there are events
  $a = (t, \lockE{l}), r = (t, \unlockE{l}) \in T$ for
  some lock~$l$.
  We say that $e \in T$ is in the
  \emph{critical section enclosed by acquire $\evtAcc$
    and release $\evtRel$ under partial order~$\pLt[T]$},
  written $e \in \CS{\pLt[T]}{a,r}$
  if the following two conditions are met:

  \begin{description}
  \item[(\PO-CS-Enclosed)] $a \pLt[T] e$ and $e \pLt[T] r$.
\item[(\PO-CS-Match)] There is no release event $r'= (t, \unlockE{l}) \in T$  and
       $T' \in \crp(T)$
     such that $a \TrLt[T'] r'$ and $r' \TrLt[T'] e$.
  \end{description}

\end{definition}

\begin{definition}[\PO-Locks Held]
 \label{def:p-lh}
  The \emph{lock set for an event $e\in T$ under partial order~$\pLt[T]$} is defined by
  $\LH{\pLt[T]}{e} =
  \{ (l,t) \mid \exists a, r \in T \ \mbox{such that} \
    a = (t,\lockE{l}) \ \mbox{and} \
     e \in \CS{\pLt[T]}{a,r}$.
\end{definition}

We are particularly interested in partial orders
that underapproximate the set of all correct reorderings.
By underapproximate we mean that if
two events $e$ and $f$ are ordered such that $e \pLt[T] f$,
then in all reorderings $e$ appears before~$f$.
The following definition formalizes this requirement.

\begin{definition}[Must Happen Before Criteria]
  We say~$\pLt[T]$ satisfies
  the \emph{must happen before} criteria if
  the following condition is met.

  \begin{description}
\item[MHB-Criteria] If $e \pLt[T] f$ for some trace $T$
  then for all $T' \in \crp(T)$ where $f \in T'$ we find that
  $e\in T'$ and ~$e \TrLt[T'] f$.
  \end{description}

\end{definition}

All \emph{underapproximating} partial orders
that satisfy the \textbf{MHB-Criteria}
are sound in the sense that every lock held
discovered by the approximation is an actual lock held.

\begin{restatable}{lemma}{mhbcriteriainclusion}
 \label{le:mhb-criteria-inclusion}
  Let $\pLt[T]$ be some partial order satisfying
  the \textbf{MHB-Criteria}.
  Then, $\LH{\pLt[T]}{e} \subseteq \GeneralLHeld{T}{e}$ for all $e \in T$.
\end{restatable}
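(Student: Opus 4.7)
The plan is to unfold the two lock set definitions and reduce the inclusion to a direct comparison between $\pLt[T]$ and $\crpLt[T]$. Specifically, take any $(l,t) \in \LH{\pLt[T]}{e}$. By \cref{def:p-lh}, there exist matching events $a = (t,\lockE{l})$ and $r = (t,\unlockE{l})$ in $T$ with $e \in \CS{\pLt[T]}{a,r}$. To conclude $(l,t) \in \GeneralLHeld{T}{e}$, I need $e \in \GeneralCSect{T}{a,r}$, i.e., the two conditions \textbf{CS-Enclosed} and \textbf{CS-Match} from \cref{def:cs}. Condition \textbf{CS-Match} is literally identical to \textbf{\PO-CS-Match}, so it comes for free. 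The real content is obtaining \textbf{CS-Enclosed}, which says $a \crpLt[T] e$ and $e \crpLt[T] r$, from \textbf{\PO-CS-Enclosed}, which says $a \pLt[T] e$ and $e \pLt[T] r$.

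For this, I would state and use an auxiliary observation: under the \textbf{MHB-Criteria}, $\pLt[T]$ is contained in $\crpLt[T]$. This is essentially immediate by unfolding: the defining property of $\crpLt[T]$ is precisely ``for every $T' \in \crp(T)$ with $f \in T'$, we have $e \in T'$ and $e \TrLt[T'] f$,'' which is exactly the consequent of \textbf{MHB-Criteria}. Applying this observation twice (once to $a \pLt[T] e$ and once to $e \pLt[T] r$) yields the two required $\crpLt[T]$-relations.

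Combining \textbf{CS-Enclosed} (just established) with \textbf{CS-Match} (inherited) gives $e \in \GeneralCSect{T}{a,r}$, and hence $(l,t) \in \GeneralLHeld{T}{e}$, completing the inclusion for an arbitrary element. Since this argument does not depend on the particular event $e$, it yields the claim for all $e \in T$.

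There is no real obstacle here: the lemma is essentially a bookkeeping step that records the precise sense in which \textbf{MHB-Criteria} strengthens $\pLt[T]$ to a must-happen-before relation. The only subtle point worth noting in the write-up is that \textbf{\PO-CS-Match} and \textbf{CS-Match} are stated identically (both quantify over $\crp(T)$ and use trace order, not $\pLt[T]$), so the partial order only plays a role in the enclosure condition.
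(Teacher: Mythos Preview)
Your proposal is correct and follows essentially the same approach as the paper's proof: both observe that \textbf{\PO-CS-Match} and \textbf{CS-Match} are identical, and both use the \textbf{MHB-Criteria} to upgrade \textbf{\PO-CS-Enclosed} to \textbf{CS-Enclosed}. Your packaging of the key step as the inclusion $\pLt[T] \subseteq \crpLt[T]$ is a clean way to phrase what the paper does by direct unfolding, and your handling of the pair $(l,t)$ rather than just $l$ is in fact slightly more careful than the paper's own write-up.
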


Importantly, the set of predictable deadlocks
can be approximated by  partial orders
that satisfy the \textbf{MHB-Criteria}.

\begin{restatable}{lemma}{lepdmhb}
  \label{le:pd-mhb}
    Let $\pLt[T]$ be some partial order satisfying
    the \textbf{MHB-Criteria}.
    Then, $\AllPD{\LHH{\pLt[T]}} \subseteq \AllPD{\GeneralLHeldd{T}}$.
\end{restatable}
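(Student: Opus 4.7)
The plan is to show each $\{q_1,\ldots,q_n\} \in \AllPD{\LHH{\pLt[T]}}$ also lies in $\AllPD{\GeneralLHeldd{T}}$ by reusing the partial-order witness $T' \in \crp(T)$. The witness condition (each $q_k$ final in $T'$) is independent of the lock-set function, so it transfers unchanged, as does \textbf{DP-Thread}, which is a purely syntactic constraint on the $q_k$. For \textbf{DP-Cycle}, \cref{le:mhb-criteria-inclusion} gives $\LH{\pLt[T]}{q_{(i \mod n)+1}} \subseteq \GeneralLHeld{T}{q_{(i \mod n)+1}}$, so every $l_i$ witnessing \textbf{DP-Cycle} under the partial-order lock set still witnesses it under $\GeneralLHeldd{T}$.

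The main obstacle is \textbf{DP-Guard}: lock-set enlargement is antagonistic to disjointness, and the same inclusion used above could in principle introduce a common multi-thread guard lock that does not show up under $\LHH{\pLt[T]}$. I would argue by contradiction. Suppose $(l,s) \in \GeneralLHeld{T}{q_i}$ and $(l,t) \in \GeneralLHeld{T}{q_j}$ with $s\ne t$ and $i\ne j$. Unfolding \cref{def:lh} and \cref{def:cs} gives acquires $a_s=(s,\lockE{l})$ and $a_t=(t,\lockE{l})$ in $T$ with matching releases $r_s,r_t$ such that $q_i \in \GeneralCSect{T}{a_s,r_s}$ and $q_j \in \GeneralCSect{T}{a_t,r_t}$. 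From \textbf{CS-Enclosed} combined with $q_i,q_j \in T'$ I obtain $a_s,a_t \in T'$ and $a_s \TrLt[T'] q_i$, $a_t \TrLt[T'] q_j$. Assume WLOG $a_s \TrLt[T'] a_t$. Well-formedness (\textbf{WF-Acq}) in $T'$ forces some release of $l$ by thread $s$ strictly between $a_s$ and $a_t$; because $r_s$ is the first release of $l$ by $s$ after $a_s$ in $T$ and $T'$ respects \textbf{CRP-PO}, this release must be $r_s$, so $r_s \in T'$. \textbf{CS-Match} instantiated with $T'$ rules out $r_s$ between $a_s$ and $q_i$, yielding $q_i \TrLt[T'] r_s$.

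The contradiction then splits into two subcases. If $s = \thd(q_i)$, then $r_s$ is an event in thread $\thd(q_i)$ appearing strictly after $q_i$ in $T'$, which immediately violates the finality of $q_i$. If $s \ne \thd(q_i)$, I exploit the full strength of $q_i \crpLt[T] r_s$ via a minimal-closure argument: the minimal CRP containing $r_s$, obtained by closing $\{r_s\}$ under \textbf{CRP-PO} and \textbf{CRP-LW}, must contain $q_i$. Since $q_i$ is a request it cannot enter this closure via \textbf{CRP-LW}, and since $\thd(q_i)\ne s$ it does not enter via the initial PO-closure in thread $s$. Hence some $f \in \thd(q_i)$ with $q_i \TrLt[T] f$ must lie in the closure, and therefore in $T'$; \textbf{CRP-PO} then forces $q_i \TrLt[T'] f$, again contradicting the finality of $q_i$ in $T'$. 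The delicate part is turning this informal closure discussion into a clean inductive argument — specifically, showing that the minimal CRP containing $r_s$ exists, is well-formed, and is a subset of every CRP containing $r_s$ — but this should follow from the standard fixed-point construction used in prior CRP proofs.
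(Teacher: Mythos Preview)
Your overall strategy matches the paper's: the witness and \textbf{DP-Thread} transfer unchanged, \textbf{DP-Cycle} follows from \cref{le:mhb-criteria-inclusion}, and the real work lies in showing that a hypothetical \textbf{DP-Guard} violation under $\GeneralLHeldd{T}$ is incompatible with the witness $T'$. Your derivation that $a_s,a_t \in T'$, that well-formedness forces the matching release $r_s \in T'$, and that $q_i \crpLt[T] r_s$ then yields $q_i \TrLt[T'] r_s$, is correct and in fact more carefully spelled out than the paper's rather terse version of the same step.

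The genuine gap is your Case~2. The closure of $\{r_s\}$ under \textbf{CRP-PO} and \textbf{CRP-LW} is in general \emph{not} a correctly reordered prefix: two acquires on the same lock can both enter this closure without any intervening release, violating \textbf{WF-Acq}. Since $q_i \crpLt[T] r_s$ only constrains genuine elements of $\crp(T)$, you cannot conclude that $q_i$ lies in this closure, so the first link of your chain ``$q_i$ in closure $\Rightarrow$ some later $f$ in $\thd(q_i)$ in closure $\Rightarrow f \in T'$'' is unjustified. Repairing the closure by also inserting releases does not give a unique minimal set either (conflicting critical sections can be interleaved in either order), so the fixed point you gesture at need not exist. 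A much cleaner route, which also subsumes your Case~1: since $q_i$ is a \emph{final request} in $T'$, the trace $T' \setminus \{q_i\}$ is itself in $\crp(T)$ --- dropping the last event of a thread preserves \textbf{CRP-PO}, a request is neither a read nor a write so \textbf{CRP-LW} is unaffected, and \textbf{WF-Req} survives because the fulfilling acquire was never in $T'$. This smaller prefix still contains $r_s$ (a release is not a request) but not $q_i$, directly contradicting $q_i \crpLt[T] r_s$.
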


It immediately follows that the stuckness guarantee (\cref{le:stuckness})
also holds for predicable deadlocks built via $\LHH{\pLt[T]}$.

\begin{lemma}
\label{le:stuckness-p}
  Let $\SET{q_1,\dots ,q_n} \in \AllPD{\LHH{\pLt[T]}}$
  where $T'$ is the witness for $\SET{q_1,\dots ,q_n}$.
  Let $a_i$ be the acquire that fulfills the lock request~$q_i$.
  Then, there is no $T''$ such that $T' \conc T'' \conc [a_i]$ is well-formed.
\end{lemma}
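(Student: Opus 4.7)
The plan is to derive this lemma as a direct corollary of the two preceding results, \cref{le:pd-mhb} and \cref{le:stuckness}. Neither Lemma~\ref{le:pd-mhb} nor Lemma~\ref{le:stuckness} alone yields the statement, but together they should give it immediately once we check that the notion of \emph{witness} is independent of the lock-set function used.

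First I would take $\SET{q_1,\dots,q_n} \in \AllPD{\LHH{\pLt[T]}}$ with witness $T' \in \crp(T)$, meaning that each $q_i$ is final in $T'$. By \cref{le:pd-mhb}, we obtain $\SET{q_1,\dots,q_n} \in \AllPD{\GeneralLHeldd{T}}$. The small but necessary observation at this point is that the witness condition from \cref{def:pred-deadlock}, namely ``$T' \in \crp(T)$ and each $q_i$ is final in $T'$,'' is a purely trace-theoretic property of $T'$: it does not mention any lock-set function. Hence the very same $T'$ serves as a witness for $\SET{q_1,\dots,q_n} \in \AllPD{\GeneralLHeldd{T}}$.

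Having re-established membership in $\AllPD{\GeneralLHeldd{T}}$ with the same witness $T'$, the conclusion follows by applying \cref{le:stuckness} directly: for the acquire $a_i$ that fulfills $q_i$, no $T''$ exists such that $T' \conc T'' \conc [a_i]$ is well-formed.

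There is no real obstacle here beyond making the transfer of the witness explicit; the content is entirely packaged in the two prior lemmas. If anything is worth double-checking, it is that the inclusion in \cref{le:pd-mhb} is proved in a form that preserves witnesses (i.e., not merely an abstract set inclusion but an argument showing that every witness for the $\LHH{\pLt[T]}$ side is also a witness on the $\GeneralLHeldd{T}$ side). Inspecting the shape of \cref{def:pred-deadlock} confirms this, since the deadlock-pattern conditions \textbf{DP-Thread}, \textbf{DP-Cycle}, \textbf{DP-Guard} and the witness requirement are decoupled, so $T'$ is carried over unchanged.
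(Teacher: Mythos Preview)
Your proposal is correct and matches the paper's approach exactly: the paper simply states that the result ``immediately follows'' from \cref{le:pd-mhb} and \cref{le:stuckness}, and your argument is precisely the natural unpacking of that remark, including the observation that the witness condition in \cref{def:pred-deadlock} is independent of the lock-set function.
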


\begin{figure}[t]

  \begin{subfigure}[t]{0.42\textwidth}
    \bda{l}
\ba{|l|l|l|l|}
\hline
\traceNew{HBTrace} & \thread{1} & \thread{2} & \thread{3}\\ \hline
\eventE{1}  & \lockE{\LKA}&&\\
\eventE{2}  & \lockE{\LKC}&&\\
\eventE{3}  & \unlockE{\LKC}&&\\
\eventE{4}  & &\lockE{\LKC}&\\
\eventE{5}  & &\unlockE{\LKC}&\\
\eventE{6}  & &\reqLockE{\LKB}&\\
\eventE{7}  & &\lockE{\LKB}&\\
\eventE{8}  & &\unlockE{\LKB}&\\
\eventE{9}  & &\lockE{\LKC}&\\
\eventE{10}  & &\unlockE{\LKC}&\\
\eventE{11}  & \lockE{\LKC}&&\\
\eventE{12}  & \unlockE{\LKC}&&\\
\eventE{13}  & \unlockE{\LKA}&&\\
\eventE{14}  & &&\lockE{\LKB}\\
\eventE{15}  & &&\reqLockE{\LKA}\\
\eventE{16}  & &&\lockE{\LKA}\\
\eventE{17}  & &&\unlockE{\LKA}\\
\eventE{18}  & &&\unlockE{\LKB}\\

\hline \ea{}
\eda
      \caption{$e_1 \hbLt[\traceRef{HBTrace}] e_6 \hbLt[\traceRef{HBTrace}] e_{13}.$}
 \label{fig:hb-not-mhb}
\end{subfigure}
  \qquad
  \begin{subfigure}[t]{0.42\textwidth}
    \bda{l}
\ba{|l|l|l|l|}
\hline
\traceNew{HBTraceWitness} & \thread{1} & \thread{2} & \thread{3}\\ \hline
\eventE{1}  & \lockE{\LKA}&&\\
\eventE{14}  & &&\lockE{\LKB}\\
\eventE{2}  & \lockE{\LKC}&&\\
\eventE{3}  & \unlockE{\LKC}&&\\
\eventE{4}  & &\lockE{\LKC}&\\
\eventE{5}  & &\unlockE{\LKC}&\\
\eventE{6}  & &\reqLockE{\LKB}&\\
\eventE{15}  & &&\reqLockE{\LKA}\\
 \eventE{11}  & \lockE{\LKC}&&\\
 \eventE{12}  & \unlockE{\LKC}&&\\
 \eventE{13}  & \unlockE{\LKA}&&\\
 \eventE{16}  & &&\lockE{\LKA}\\
 \eventE{17}  & &&\unlockE{\LKA}\\
 \eventE{18}  & &&\unlockE{\LKB}\\
 \eventE{7}  & &\lockE{\LKB}&\\
 \eventE{8}  & &\unlockE{\LKB}&\\
 \eventE{9}  & &\lockE{\LKC}&\\
 \eventE{10}  & &\unlockE{\LKC}&\\

\hline \ea{}

\eda
\caption{Witness $[e_1,e_{14},e_2,e_3,e_4,e_5,e_6,e_{15}]$ is not stuck
as shown by the extension $[e_{11},e_{12},e_{13},e_{16},e_{17},e_{18},e_7,e_8,e_9,e_{10}]$.}
 \label{fig:witness-not-stuck}
  \end{subfigure}
  \caption{\HB\ does not satisfy the \textbf{MHB-criteria}. Stuckness no longer holds.}
\label{fig:hb-not-stuck}
\end{figure}

\subsection{Overapproximating Partial Order Relations}

The literature offers a wealth of partial order relations~\cite{Smaragdakis:2012:SPR:2103621.2103702,conf/pldi/KiniMV17,Mathur:2018:HFR:3288538:3276515,10.1145/3360605,conf/mplr/SulzmannS20}.
Many of these partial orders are not suitable for our purposes because
they violate the \textbf{MHB-criteria}.
As a representative example, we
consider Lamport’s happens-before (HB) relation~\cite{lamport1978time}.

\begin{definition}[Happens-Before Order (\HB) \cite{lamport1978time}]
  \label{def:hb-order}
  Given trace $T$, we define the \emph{happens-before-order (\HB)}
  as the smallest strict partial order on $\evts(T)$
  such that
  \begin{description}
  \item[\nf{(a)}] $e \hbLt[T] f$ if
  $e \TrLt[T] f$ and $\thd(e) = \thd(f)$ ,
  \item[\nf{(b)}] $r \hbLt[T] a$ if $r=(\_,\unlockE{l})$ and $a=(\_,\lockE{l})$
    and $r \TrLt[T] a$.
   \end{description}
\end{definition}

\HB\ does not satisfy the \textbf{MHB-criteria}
and therefore~\cref{le:stuckness} does not hold for $\AllPD{\LHH{\hbLt[T]}}$.

\begin{example}
Consider \traceRef{HBTrace} in \cref{fig:hb-not-mhb}.
We find that $e_1 \hbLt[\traceRef{HBTrace}] e_6 \hbLt[\traceRef{HBTrace}] e_{13}$
and therefore $l_1 \in \HBLHeld{\traceRef{HBTrace}}{e_6}$.
We conclude that $\SET{e_6, e_{15}} \in \AllDP{\LHH{\hbLt[T]}}$
is a deadlock pattern.
In fact, $\SET{e_6, e_{15}} \in \AllPD{\LHH{\hbLt[T]}}$
is also a predictable deadlock as shown
by the witness~$[e_1,e_{14},e_2,e_3,e_4,e_5,e_6,e_{15}]$.
See \cref{fig:witness-not-stuck}.
However, this witness is not stuck as there is an extension
where we can include the acquire~$e_7$ to fulfill the request~$e_6$
and the acquire~$e_{16}$ to fulfill the request~$e_{15}$.
The reason why this is possible is that \HB\ does not satisfy
the \textbf{MHB-criteria}.
For example, we can place the events $[e_{11},e_{12},e_{13}]$
above the acquire $e_7$.
We can even place events $[e_{11},e_{12},e_{13}]$
above the request~$e_6$.
Although $l_1 \in \LH{\hbLt[\traceRef{HBTrace}]}{e_6}$ this does not
guarantee that request~$e_6$ is protected by lock~$l_1$.
\end{example}

Similar observations apply to the
SHB~\cite{Mathur:2018:HFR:3288538:3276515},
WCP~\cite{conf/pldi/KiniMV17}
and
SDP~\cite{10.1145/3360605}
partial orders.

\subsection{Underapproximating Partial Order Relations}

Fortunately, there are a number of (underapproximating) partial orders
that satisfy the \textbf{MHB-criteria}.
In the following, we consider three examples.
All three examples are well-known and
can be found in one form or another in the literature.
Their use for the purpose of lock set-based dynamic deadlock prediction
is novel, however.

The most simple partial order is the one that
only orders events in the same thread based on their textual position.

\begin{definition}[Thread-Order (\TO)]
  \label{def:thread-order}
  Given trace $T$, we define the \emph{thread-order (\TO)}
  as the smallest strict partial order~$\toLt[T]$ on $\evts(T)$
  such that $e \toLt[T] f$ if
  $e \TrLt[T] f$ and $\thd(e) = \thd(f)$.
\end{definition}

We strengthen thread-order by including last-write dependencies.

\begin{definition}[Last-Write-Order (\LW)]
  \label{def:last-write-order}
  Given trace $T$, we define the \emph{last-write-order (\LW)}
  as the smallest strict partial order~$\lwLt[T]$ on $\evts(T)$
  such that
  \begin{description}
   \item[\nf{(a)}] $\toLt[T] \subseteq \lwLt[T]$,
   \item[\nf{(b)}] $e \lwLt[T] f$ if $e = \lw[T](f)$.
  \end{description}
\end{definition}
Recall that for read~$f$, function $\lw[T](f)$ yields
the write on the same shared variable that immediately precedes
$f$ in trace $T$.

The final partial order imposes an ordering among critical sections.
WCP~\cite{conf/pldi/KiniMV17} and SDP~\cite{10.1145/3360605} order critical sections
but are not suitable because they overapproximate and
violate the \textbf{MHB-criteria}.
We follow WDP~\cite{10.1145/3360605} and PWR~\cite{conf/mplr/SulzmannS20}
and only order critical sections if absolutely necessary.
We only consider `conflicts' if they involve last write dependencies.

\begin{definition}[Release-Order (\RO)]
  \label{def:release-order}
  Given trace $T$, we define the \emph{release-order (\RO)}
  as the smallest strict partial order~$\roLt[T]$ on $\evts(T)$
  such that
  \begin{description}
   \item[\nf{(a)}]
  $\lwLt[T] \subseteq \roLt[T]$,
     \item[\nf{(b)}] $r \roLt[T] f$ if
  $e \in \CS{\toLt[T]}{a,r}$ and
       $f\in \CS{\toLt[T]}{a',r'}$,
         $a=(\_,\lockE{l})$, $a'=(\_,\lockE{l})$
       and $e \lwLt[T] f$.
  \end{description}
\end{definition}
Examples that show the above partial orders in action are coming up shortly.
We first state some important properties.

$\LHH{\toLt[T]}$ corresponds to the existing (standard) lock set construction.
All three partial orders $\toLTSym$, $\lwLTSym$ and $\roLTSym$
enjoy the \textbf{MHB-Criteria} and yield an (increasingly)
improved lock set construction which leads to increased
precision of lock set-based dynamic deadlock prediction.

\begin{restatable}{lemma}{leromhb}
  \label{le:ro-mhb}
(a) $\toLTSym \subseteq \lwLTSym \subseteq \roLTSym$.
(b) $\toLTSym$, $\lwLTSym$ and $\roLTSym$ satisfy the \textbf{MHB-Criteria}.
\end{restatable}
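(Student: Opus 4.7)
My plan is to handle parts~(a) and~(b) in order, with the real work concentrated in the release-order case of part~(b).

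For part~(a), both inclusions are immediate from the constructive definitions: $\lwLTSym$ is, by clause~(a) of the \LW\ definition, the smallest strict partial order containing $\toLTSym$ together with last-write pairs, and $\roLTSym$ similarly extends $\lwLTSym$ by clause~(a) of the \RO\ definition.

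For part~(b), I would prove the \textbf{MHB-Criteria} for each relation by induction on a derivation $e \pLt[T] f$. The key observation is that the \textbf{MHB-Criteria} are closed under transitive composition: given a chain $e_0 \pLt[T] e_1 \pLt[T] \dots \pLt[T] e_n = f$ with $f \in T'$, repeated backward application of the criterion to each edge yields every $e_i \in T'$ with $e_0 \TrLt[T'] e_1 \TrLt[T'] \dots \TrLt[T'] f$. Hence it suffices to verify \textbf{MHB-Criteria} on the generating pairs of each relation. For \TO\ this is immediate from \textbf{CRP-PO}, since thread projections are preserved as prefixes. For \LW\ the \TO\ generators are handled by the previous case, while new pairs $e = \lw[T](f)$ are handled by \textbf{CRP-LW}, which gives $e = \lw[T'](f)$ and hence $e \in T'$ with $e \TrLt[T'] f$.

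The main obstacle is the release-order generating pair $r \roLt[T] f$. Unfolding the definition, there exist acquires $a$, $a'$ and matching releases $r$, $r'$ on the same lock $l$, residing in threads $t$ and $t'$ respectively, together with events $e$ satisfying $a \toLt[T] e \toLt[T] r$ and $f$ satisfying $a' \toLt[T] f \toLt[T] r'$, and $e \lwLt[T] f$. Given $T' \in \crp(T)$ with $f \in T'$, I would first invoke the already-established \textbf{MHB-Criteria} for \TO\ and \LW\ to get $a, a', e \in T'$ with $a \TrLt[T'] e \TrLt[T'] f$ and $a' \TrLt[T'] f$. Now \textbf{WF-Acq} on $T'$ forces a matching release of $l$ between the two acquires $a, a'$ in $T'$, and I would split on their relative order:
\begin{itemize}
\item If $a \TrLt[T'] a'$, the intervening release lies in thread $t$ and, by uniqueness of matching releases (\textbf{WF-Rel}), must equal $r$. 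Hence $r \in T'$ with $r \TrLt[T'] a' \TrLt[T'] f$, as required.
\item If $a' \TrLt[T'] a$, the intervening release must be $r'$. Then $r' \TrLt[T'] a \TrLt[T'] e \TrLt[T'] f$, while \textbf{CRP-PO} on thread $t'$ preserves the $T$-order $f \toLt[T] r'$, yielding $f \TrLt[T'] r'$. Contradiction.
\end{itemize}
Ruling out the second sub-case via \textbf{CRP-PO} on thread $t'$ is the subtle step I expect to be the main hurdle; everything else reduces to bookkeeping on top of the \LW\ and \TO\ cases already established.
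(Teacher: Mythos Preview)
Your proof is correct and follows essentially the same strategy as the paper's: both establish part~(a) directly from the definitions and prove part~(b) by induction on the derivation, handling the release-order generator via the well-formedness of~$T'$ together with the already-established \textbf{MHB-Criteria} for \TO\ and \LW. Your explicit case split on the relative order of~$a$ and~$a'$ in~$T'$ is in fact more careful than the paper's argument, which simply asserts that~$r$ must precede~$a'$ ``as otherwise lock semantics is violated'' without ruling out the possibility~$a' \TrLt[T'] a$; your contradiction via \textbf{CRP-PO} on thread~$t'$ (using $f \toLt[T] r'$) fills that gap cleanly. One minor imprecision: the release that \textbf{WF-Acq} supplies between the two acquires in~$T'$ need not literally be~$r$ (respectively~$r'$)---it could be a later release in the same thread---but the prefix property of \textbf{CRP-PO} then forces~$r$ (respectively~$r'$) into~$T'$ as well, so your conclusion stands unchanged.
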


Via \cref{le:mhb-criteria-inclusion}, \cref{le:pd-mhb}
and \cref{le:ro-mhb} we arrive at the following result.

\begin{lemma}
\label{le:pd-to-lw-ro}
(a) $\LH{\toLt[T]}{e} \subseteq \LH{\lwLt[T]}{e} \subseteq \LH{\roLt[T]}{e} \subseteq \GeneralLHeld{T}{e}$ for all $e \in T$.
(b) $\AllPD{\LHH{\toLt[T]}} \subseteq \AllPD{\LHH{\lwLt[T]}} \subseteq \AllPD{\LHH{\roLt[T]}} \subseteq \AllPD{\GeneralLHeldd{T}}$.
\end{lemma}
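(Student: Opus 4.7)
The plan is to derive both parts by combining the monotonicity of the lock-set construction under inclusion of partial orders with the lemmas proved immediately above.

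For part~(a), I first establish a general monotonicity principle: whenever $\prec_1 \subseteq \prec_2$ are strict partial orders on $\evts(T)$, $\LH{\prec_1}{e} \subseteq \LH{\prec_2}{e}$ for every event $e \in T$. This is immediate from \cref{def:pcs}: the PO-CS-Enclosed conditions $a \prec_1 e$ and $e \prec_1 r$ transfer verbatim to the larger $\prec_2$, and PO-CS-Match is formulated purely in terms of $T$ and its correctly reordered prefixes, with no dependence on the partial order. Combined with \cref{le:ro-mhb}(a), which supplies $\toLTSym \subseteq \lwLTSym \subseteq \roLTSym$, this yields the first two inclusions of part~(a). The third inclusion, $\LH{\roLt[T]}{e} \subseteq \GeneralLHeld{T}{e}$, is a direct application of \cref{le:mhb-criteria-inclusion}, since \cref{le:ro-mhb}(b) guarantees that $\roLt$ satisfies the MHB-Criteria.

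For part~(b), the final inclusion $\AllPD{\LHH{\roLt[T]}} \subseteq \AllPD{\GeneralLHeldd{T}}$ is an immediate consequence of \cref{le:pd-mhb} instantiated at $\roLt$. To establish the intermediate inclusions $\AllPD{\LHH{\toLt[T]}} \subseteq \AllPD{\LHH{\lwLt[T]}} \subseteq \AllPD{\LHH{\roLt[T]}}$, I plan to strengthen \cref{le:pd-mhb} to the form: whenever $\prec_1 \subseteq \prec_2$ are partial orders both satisfying the MHB-Criteria, $\AllPD{\LHH{\prec_1}} \subseteq \AllPD{\LHH{\prec_2}}$. Fix $\SET{q_1, \ldots, q_n} \in \AllPD{\LHH{\prec_1}}$ with witness~$T'$. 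The witness~$T'$ remains a witness under $\prec_2$ because the notion of a final event is independent of the lock-set function, so only the three deadlock-pattern conditions of \cref{def:deadlock-pattern} must be re-verified with respect to $\LHH{\prec_2}$. DP-Thread is identical, and DP-Cycle transfers by the lock-set monotonicity established in part~(a).

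The principal obstacle is DP-Guard, the one condition that can only become harder when lock sets grow: a pair $(l,s) \in \LH{\prec_2}{q_i} \setminus \LH{\prec_1}{q_i}$ together with $(l,t) \in \LH{\prec_2}{q_j}$ for $s \neq t$ would spoil pairwise disjointness. I plan to rule this out by promoting both memberships to the true lock set via the MHB-Criteria and \cref{le:mhb-criteria-inclusion}, obtaining $(l,s) \in \GeneralLHeld{T}{q_i}$ and $(l,t) \in \GeneralLHeld{T}{q_j}$, and then deriving a contradiction from the witness. Unfolding \cref{def:cs} with $T'$ in CS-Enclosed places acquires $a = (s,\acqE{l})$ and $a' = (t,\acqE{l})$ inside $T'$ before $q_i$ and $q_j$ respectively, and CS-Match forbids any intervening release of $l$ from the acquiring thread up to the corresponding request. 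Case analysis on the relative positions of $a, a', q_i, q_j$ in $T'$, combined with finality of each $q_k$ (no event of $\thd(q_k)$ after $q_k$ in $T'$) and the cyclic constraints imposed by DP-Cycle (each $\thd(q_k)$ is blocked at its request while holding its predecessor lock at the end of $T'$), should close off every configuration in which $l$ is held by two distinct acquiring threads across $q_i$ and $q_j$. Making this combinatorial case analysis tight for arbitrary cycle length $n$ is the main effort of the proof; the spectator-thread subcase, where the putative common lock is released and re-acquired between $q_i$ and $q_j$, is the sub-case I expect to need the most care.
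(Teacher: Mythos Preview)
Your proposal is correct and follows the paper's line: the paper states the lemma as a direct consequence of \cref{le:mhb-criteria-inclusion}, \cref{le:pd-mhb}, and \cref{le:ro-mhb}, and your plan simply unfolds that citation, making explicit the monotonicity of $\LHH{\cdot}$ in the partial order and the need to re-verify \textbf{DP-Guard} when lock sets grow.

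One simplification for the \textbf{DP-Guard} contradiction in part~(b): because each $q_k$ is final in the witness~$T'$, a request commutes past any later event in another thread, so without loss of generality $T' = T_1 \conc [q_1,\ldots,q_n]$. With this normalization the argument is immediate: from $(l,s)\in\GeneralLHeld{T}{q_i}$ and $(l,t)\in\GeneralLHeld{T}{q_j}$ with $s\neq t$ you obtain distinct acquires $a,a'$ on~$l$ with $a\crpLt[T] q_i$, $q_i\crpLt[T] r$, $a'\crpLt[T] q_j$, $q_j\crpLt[T] r'$; both $a,a'$ lie in~$T'$, well-formedness forces the matching release of the earlier one (say~$r$) to appear between them in~$T'$, and $q_i\crpLt[T] r$ then places $q_i$ strictly before~$r$ in~$T'$---but only requests follow $q_i$ in~$T'$, and $r$ is a release. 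This is exactly the shape of the appendix proof of \cref{le:pd-mhb}, so the extended case analysis on spectator threads and cycle length that you anticipate is not needed.
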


The inclusion (a) in \cref{le:pd-to-lw-ro}
is strict as shown by the following examples
where we revisit the traces in \cref{fig:ex-cs}.

\begin{example}
  Consider trace~\traceRef{TraceCSLW}.
  Based on the arguments in~\cref{ex:lw-but-not-to}
  we find that   $\{ (\LKA, \thread{1}) \} = \LH{\lwLt[\traceRef{TraceCSLW}]}{e_5}$
  but clearly $\{  \} = \LH{\toLt[\traceRef{TraceCSLW}]}{e_5}$.
\end{example}

\begin{example}
  For \traceRef{TraceCSRO} we find
  that (1) $e_6 \lwLt[\traceRef{TraceCSRO}] e_9$
  due to thread-order and the last-write dependency between $e_8$ and $e_9$.
  However, $e_3 \not \lwLt[\traceRef{TraceCSRO}] e_6$ and
  therefore $\{\} = \LH{\lwLt[T]}{e_6}$.
  Because partial order \RO\ orders conflicting critical sections we
  find that (2) $e_3 \roLt[\traceRef{TraceCSRO}] e_6$.
  From (1) and (2) we conclude that $\{(\LKA,\thread{1})\} = \LH{\roLt[\traceRef{TraceCSRO}]}{e_6}$.
  (2) holds for the following reason.
  We find that $e_2 \in \CS{\toLt[\traceRef{TraceCSRO}]}{e_1,e_4}$
  and $e_6 \in \CS{\toLt[\traceRef{TraceCSRO}]}{e_5,e_7}$.
  Furthermore, $e_2 \lwLt[\traceRef{TraceCSRO}] e_6$
  and thus we obtain (2) by application of case (b) in~\cref{def:release-order}.
\end{example}

\begin{example}
  Consider trace~\traceRef{TraceCSROBYND}.
  The reasoning in \cref{ex:lh-but-not-ro}
  shows that $\{(\LKA,\thread{1}) \} = \GeneralLHeld{\traceRef{TraceCSROBYND}}{e_8}$
  but $\{ \} =  \LH{\roLt[\traceRef{TraceCSROBYND}]}{e_8}$
  for the following reason.
  We have that $e_2 \roLt[\traceRef{TraceCSROBYND}] e_8$
  and $e_2 \in \CS{\toLt[\traceRef{TraceCSROBYND}]}{e_1,e_3}$
  but $e_8$ is not part of any `standard' critical section for lock~$l_1$.
  For example, $\CS{\toLt[\traceRef{TraceCSROBYND}]}{e_5,e_{11}}$
  includes $e_6$, $e_{10}$ but not $e_7$, $e_8$ and $e_9$.
\end{example}

The upcoming section shows that the partial order-based lock set constructions
can be computed efficiently and easily integrated into an
existing lock set-based deadlock predictor.
Compared to trace-based lock set characterization there is a loss in precision.
The inclusions (a) and (b) in \cref{le:pd-to-lw-ro} are strict.
Case (b) can be shown by extending the traces in \cref{le:pd-to-lw-ro}.




\section{Implementation}
\label{sec:implementation}

We integrate the partial order-based lock set constructions
into the state of the art deadlock predictor~\SPDOffline~\cite{conf/pldi/TuncMPV23}.
\SPDOffline operates \emph{offline} by taking as input a trace to compute
a set of \emph{sync-preserving} deadlocks.
Sync-preserving deadlocks are a subclass of the set of predictable deadlocks
where in the witness trace the relative order of acquires on the same lock
is the same as in the input trace.
All of our results also apply to sync-preserving deadlocks
and all of the deadlocking examples in the main text of this paper
are sync-preserving.

\SPDOffline\ has two phases:
\begin{description}
  \item[(Phase 1)] Computation of deadlock patterns.
  \item[(Phase 2)] Checking if a particular instance of a deadlock pattern
represented as a set $\SET{q_1,...,q_n}$ of requests has
sync-preserving witness where $q_1,...,q_n$ are final.
\end{description}

Both phases are implemented by~\cref{alg:lock-deps} and~\cref{alg:dp-syncp}.
\cref{alg:lock-deps} computes deadlock patterns
where we make use of the partial order-based lock set constructions.
\cref{alg:dp-syncp} follows closely the original description~\citep[Algorithm 2]{conf/pldi/TuncMPV23}.
We first consider the computation of deadlock patterns (\cref{sec:general-lockset-algo})
before discussing the integration into \SPDOffline (\cref{sec:spd-ext}).

\subsection{Computation of Deadlock Patterns} 
\label{sec:general-lockset-algo}

\begin{algorithm*}[t!]
  \caption{Computation of \LW\ deadlock patterns.}
  \label{alg:lock-deps}

  {\small
    \begin{algorithmic}[1]
      \Function{computeLWDeadlockPattern}{$T$}
      \label{ln:cLDs}
      \State $\GlobalLHeldT = \emptyset$
        \Comment{Global lock set}
        \label{ln:global-lh}
        \State $\forall t \colon \StdLHeldT(t) = \emptyset$
        \Comment{Standard lock set of thread $t$}
        \label{ln:lh}
        \State $\forall t \colon \threadVC{t} = [\bar{0}]; \incC{\threadVC{t}}{t}$
        \Comment{Vector clock $\threadVC{t}$ of thread $t$}
        \label{ln:thvc}
        \State $\forall x \colon \lastWriteVC{x} = [\bar{0}]$
        \Comment{Vector clocks $\lastWriteVC{x}$ of most recent $\writeE{x}$}
        \label{ln:lwlr}
        \State $\forall l \colon \acqVC{l} = [\bar{0}]$
        \Comment{Vector clock $\acqVC{l}$ of most recent $\acqE{l}$}
        \label{ln:acqv}
        \State $\LDMapSym = \emptyset$
        \Comment{Map  with keys $(t,l,ls)$, value equals a list of vector clocks}
        \label{ln:ld}
        \State $\PList = \emptyset$
        \Comment{List of lock dependencies (t,l,ls,gs) with potential guards $gs$}
        \label{ln:pl}
        \ForDo {$e$ in $T$} {\Call{process}{$e$}}
              \State $DP = \{ (\LDMap{\thread{1}}{l_1}{ls_1}, \ldots, \LDMap{\thread{n}}{l_n}{ls_n}) \mid \forall i \ne j \colon l_i \in ls_{(i \mod n) + 1} \wedge ls_i \indexedcap ls_j = \emptyset \}$
      \label{ln:cyclic-chain}
        \State \Return DP
      \EndFunction
      \algstore{cld}
    \end{algorithmic}

    \begin{minipage}[t]{.52\textwidth}
      \begin{algorithmic}[1]
        \algrestore{cld}
        \Procedure{process}{$(t,acq(l))$}
        \label{ln:procAcq}
        \State $gs = \emptyset$
        \For {$l' \in \GlobalLHeldT \setminus \StdLHeldT(t)$}
        \label{ln:acq-potential-guards}
        \If {$\acqVC{l'} < \threadVC{t}$}
        \label{ln:acq-potential-guards-need-to-be-verified}
        \State $gs = gs \cup \{l'\}$
        \EndIf
        \EndFor
          \If {$gs = \emptyset \wedge \StdLHeldT(t) \not= \emptyset$}
          \State $\LDMap{t}{l}{\StdLHeldT(t)}.push(\threadVC{t})$
          \label{ln:acq-standard-lock-dep}
          \EndIf
          \If {$gs \not= \emptyset$}
          \State $\PList.push((t,l,\threadVC{t},\StdLHeldT(t),gs))$
          \label{ln:acq-possible-guards}
          \EndIf
          \State $\StdLHeldT(t) = \StdLHeldT(t) \cup \{ l \}$
          \State $\GlobalLHeldT = \GlobalLHeldT \cup \{ l \}$
          \label{ln:acq-lh-updates}
          \State $\acqVC{l} = \threadVC{t}$
          \label{ln:acq-store}
          \State $\incC{\threadVC{t}}{t}$
          \label{ln:acq-vc-inc}
        \EndProcedure
        \algstore{eacq}
      \end{algorithmic}
    \end{minipage}%
    \hfill%
    \begin{minipage}[t]{.44\textwidth}

      \begin{algorithmic}[1]
        \algrestore{eacq}
        \Procedure{process}{$(t,rel(l))$}
        \label{ln:procRel}
        \For {$p@(t', l', V, ls, gs) \in \PList$}
        \label{ln:rel-check-for-guards}
        \If {$l \in gs \wedge V < \threadVC{t}$}
        \State $gs = gs \setminus \{l\}; ls = ls \cup \{l\}$
        \label{ln:rel-guard-verified}
        \EndIf
        \If {$gs = \emptyset$}
        \State $\LDMap{t'}{l'}{ls}.push(V)$
        \State $\PList.erase(p)$
        \EndIf
        \EndFor
          \State $\StdLHeldT(t) = \StdLHeldT(t) \setminus \{ l \}$
          \State $\GlobalLHeldT = \GlobalLHeldT \setminus \{ l \}$
          \label{ln:rel-lh-updates}
          \State $\incC{\threadVC{t}}{t}$
          \label{ln:rel-vc-inc}
        \EndProcedure
        \algstore{erel}
      \end{algorithmic}

    \end{minipage}%

    \vspace{.3cm}

    \begin{minipage}[t]{.46\textwidth}
      \begin{algorithmic}[1]
        \algrestore{erel}
        \Procedure{process}{$(t,wr(x))$}
          \label{ln:procWr}
          \State $\lastWriteVC{x} = \threadVC{t}$
          \label{ln:lw-store}
          \State $\incC{\threadVC{t}}{t}$
          \label{ln:wr-vc-inc}
        \EndProcedure
        \algstore{ewr}
      \end{algorithmic}
       \end{minipage}
    \hfill%
        \begin{minipage}[t]{.46\textwidth}
      \begin{algorithmic}[1]
        \algrestore{ewr}
        \Procedure{process}{$(t,rd(x))$}
          \label{ln:procRd}
          \State $\threadVC{t} = \threadVC{t} \sqcup \lastWriteVC{x}$
          \State $\incC{\threadVC{t}}{t}$
          \label{ln:rd-vc-inc}
        \EndProcedure
      \end{algorithmic}

    \end{minipage}
  }
\end{algorithm*}

\algdef{SE}[DOWHILE]{Do}{DoWhile}{\algorithmicdo}[1]{\algorithmicwhile\ #1}%

\begin{algorithm*}[t]
  \caption{Checking for sync-preserving witness.}
  \label{alg:dp-syncp}

  {\small

    \begin{algorithmic}[1]
      \Function{computeSPDlks}{$T$}
      \label{ln:computeSPDlks}
      \State $DP = \Call{computeLWDeadlockPattern}{T}$
      \label{ln:call-computeDP}
      \State $Dlks = \emptyset$
      \For{$(F_1,\ldots,F_n) \in DP$}
      \If{$\Call{checkSPD}{t_1,F_1,\ldots,t_n,F_n} = (\OK,(V_1,\ldots,V_n))$}
      \label{ln:call-checkSPD}
      \State $Dlks = Dlks \cup \{ \{ V_1,\ldots,V_n \} \}$
      \EndIf
      \EndFor
      \State \Return $Dlks$
      \EndFunction
      \algstore{computeSPDlks}
    \end{algorithmic}

    \begin{minipage}[t]{.56\textwidth}
      \begin{algorithmic}[1]
        \algrestore{computeSPDlks}
        \Function{checkSPD}{$t_1,F_1,\ldots,t_n,F_n$}
        \label{ln:check-spd}
        \State $k_i = 0$ for $i=1,\ldots,n$
        \label{ln:index-init}
        \While{$\bigwedge_{i=1}^n k_i < F_i.size$}
        \label{ln:iterate-through-all}
        \State $V_i = F_i[k_i]$ for $i=1,\ldots,n$
        \label{ln:index-access-vc}
        \State $V'_i = \incC{F_i[k_i]}{t_i}$ for $i=1,\ldots,n$
        \label{ln:vc-of-acquire}
        \State $V = \Call{compSPClosure}{V_1,...,V_n}$
        \label{ln:spclosure}
        \If{$\forall i. V < V'_i$}
        \label{ln:dp-trw-okay}
        \State \Return $(\OK,(V_1,\ldots,V_n))$
        \EndIf
        \State $V = V_1 \sqcup \ldots \sqcup V_n$
        \label{ln:sync-candidates}
        \State $k_i = \Call{next}{V,F_i,k_i}$ for $i=1,\ldots,n$
        \label{ln:next-candidates}
        \EndWhile
        \EndFunction
        \State \Return $(\FAIL,\dontCare)$
        \algstore{checkSPD}
      \end{algorithmic}
    \end{minipage}%
    \hfill%
    \begin{minipage}[t]{.4\textwidth}

      \begin{algorithmic}[1]
        \algrestore{checkSPD}
        \Function{next}{$V,F,i$}
        \Do
        \State $(\dontCare,V',\dontCare) = F[i]$
        \If{$\neg (V' < V)$}
        \label{ln:next-entry}
        \State \Return $i$
        \EndIf
        \State $i = i + 1$
        \DoWhile{$i < F.size$}
        \State \Return $i$
        \EndFunction
      \end{algorithmic}

    \end{minipage}
  }
\end{algorithm*}

For efficient implementation of partial orders we make use of vector clocks~\cite{Fidge:1991:PAT:646210.683620,Mattern89virtualtime}.
We repeat some standard definitions for vector clocks.

\begin{definition}[Vector Clocks]
  A \emph{vector clock} $V$ is a list of time stamps of the form
  $
    [i_1,\ldots,i_n]
  $.
  We assume vector clocks are of a fixed size $n$.
  \emph{Time stamps} are natural numbers, and a time stamp at position $j$ corresponds to the thread with identifier $\thread{j}$.

  We define
  $
    [i_1,\ldots, i_n] \sqcup [j_1,\ldots,j_n] = [\maxN{i_1}{j_1},\ldots,\maxN{i_n}{j_n}]
  $
  to \emph{synchronize} two vector clocks, taking pointwise-maximal time stamps.
  We write $\incC{V}{j}$ to denote incrementing the vector clock $V$ at position $j$ by one,
  and $\accVC{V}{j}$ to retrieve the time stamp at position $j$.

  We write $V < V'$ if $\forall k \colon \accVC{V}{k} \leq \accVC{V'}{k} \wedge \exists k \colon \accVC{V}{k} < \accVC{V'}{k}$,
  and $V \VCConc V'$ if $V \not< V'$ and $V' \not< V$.
\end{definition}

\cref{alg:lock-deps} computes deadlock patterns based on the $\lwLt[T]$ partial order.
We refer to such deadlock patterns as \emph{\LW\ deadlock patterns}.
In case of deadlock patterns based on the $\roLt[T]$ partial order
the differences only concern the computation of vector clocks.
Something that has been studied extensively~\cite{Smaragdakis:2012:SPR:2103621.2103702,conf/pldi/KiniMV17,10.1145/3360605,conf/mplr/SulzmannS20}.
Hence, we focus on the construction \LW\ deadlock patterns.

Function \textsc{computeLWDeadlockPattern} takes as input
a trace and yields a set of $\LW$ deadlock patterns.
We maintain several state variables when processing events.
For every thread $t$, we have a vector clock~$\threadVC{t}$.
Vector clocks $\lastWriteVC{x}$ represent the last write event on variable~$x$.
Initially, all time stamps are set to zero (\cref{ln:thvc,ln:lwlr,ln:acqv}; $\bar{0}$ denotes a sequence of zeros), and $\threadVC{t}$ is set to 1 at position $t$ (\cref{ln:thvc}).
We use these vector clocks to compute the $\lwLt[T]$ partial order.
See the processing of read (\cref{ln:procRd}) and write (\cref{ln:procWr}) events.

For brevity, we omit explicit request events.
We assume that the vector clock before processing acquire events
implicitly represents requests.

When processing acquire and release events, we maintain two kind of lock sets.
We maintain the standard lock set $\StdLHeldT$ (corresponds to $\LHH{\toLt[T]}$)
and the `global' lock set $\GlobalLHeldT$.
The set $\GlobalLHeldT$ overapproximates the set $\LHH{\lwLt[T]}$.
When processing an acquire, we check for all locks~$l'$
that are not in $\StdLHeldT$ but could be potentially
in $\LHH{\lwLt[T]}$.
Lock~$l'$ is a potential candidate, if the operation that has acquired lock~$l'$
is in a $\LW$ ordering relation with the current acquire of lock~$l$.
To perform this check, we record the vector clock of each
acquire operation (\cref{ln:acq-store}).
We then use this information to check
if $l'$ is a potential candidate in $\LHH{\lwLt[T]}$ (\cref{ln:acq-potential-guards-need-to-be-verified}).
We collect all potential candidates in~$gs$.

If $gs$ is empty and the standard lock set is non-empty,
we can immediately record a (standard) lock dependency (\cref{ln:acq-standard-lock-dep}).
For fast access, we use a map to record lock dependencies.
There may be several acquires that share the same keys.
We store all acquires in a vector using their vector clock as a representative.

If $gs$ is non-empty we add the \emph{potentially guarded}
lock dependency
$
(t,l,\threadVC{t},\StdLHeldT(t),gs)
$
to the list~\PList.
By \emph{potentially guarded} we mean that when processing
the release of any of the potential candidates in $gs$,
we need to verify if the release is in a \LW\ ordering relation
with the current acquire.

When processing a release operation on lock~$l$,
we consult $\PList$ to check if $l$ appears as a potential guard in any $gs$.
For convenience, we make use of set notation when processing list elements.
We use pattern matching notation to refer to element $p$
of shape $(t', l', V, ls, gs)$ (\cref{ln:rel-check-for-guards}).
In our actual implementation, we do not naively iterate over the entire list
but selectively check if a release could be a potential guard.

If the current release is a potential guard, we need to check
that this release is in a $\lwLt[T]$ ordering relation with the prior acquire
that is potentially protected by lock~$l$.
This check can be again carried out via vector clocks.
For each lock dependency $(t', l', ls)$, we also record the vector clock $V$
of the acquire at the time we computed dependency $(t', l', ls)$
and the potential guards $gs$.
Hence, we only need to check if $V < \threadVC{t}$.
If yes, the potential guard could be verified as
an element in $\LHH{\lwLt[T]}$ (\cref{ln:rel-guard-verified}).


Finally, we compute deadlock patterns satisfying
conditions \textbf{DP-Cycle} and \textbf{DP-Guard} (\cref{ln:cyclic-chain}).
We assume thread-indexed lock sets.
Each lock has an attribute to identify the thread from where
the lock results from. For brevity, we omit details.

For efficiency reasons, our representation of deadlock patterns
differs from~\cref{def:deadlock-pattern}.
In~\cref{def:deadlock-pattern}, we represent
a deadlock pattern as a set $\SET{q_1,\dots ,q_n}$ of requests.
In~\cref{sec:general-lockset-algo}, each request $q_i$ is represented
by a vector clock.
We use the vector clock $\threadVC{t}$
before processing of the acquire $(t,acq(l))$ to represent the request.
All (request) vector clocks that are connected to the
same lock dependency $(l,t,ls)$ are stored in a map $\LDMapSym$
where we use $l$,$t$ and $ls$ as keys.
Each value in the map is a list of (request) vector clocks.
Hence, a deadlock pattern is a tuple $(F_1,...,F_n)$
where each $F_i$ refers to a list of vector clocks.
This is referred to as an \emph{abstract} deadlock pattern~\cite{conf/pldi/TuncMPV23}
and simplifies checking for true positives
among instance $(V_1,...,V_n)$ where $V_i$ is an element in $F_i$.

\subsection{\SPDOffline\ Integration}
\label{sec:spd-ext}

\cref{alg:dp-syncp} describes the second phase of \SPDOffline.
In this version we use \LW\ deadlock patterns.
But it is easy to use standard or \RO\ deadlock patterns instead
by adjusting the vector clock computations in the first phase.
We then verify if each abstract deadlock pattern entails
a sync-preserving deadlock (\cref{ln:check-spd}).
For convenience we assume that vector clocks in the list $F_i$ are in thread $\thread{i}$.
Each instance of an abstract deadlock pattern
is verified by procedure \textsc{compSPClosure}~\citep[Algorithm 1]{conf/pldi/TuncMPV23}.
Via the \textsc{next} function we systematically explore all instances.

Importantly, the soundness guarantee of \SPDOffline\  also holds for our extended version thanks to~\cref{le:stuckness}.
Time complexity of \SPDOffline\ is also not affected based on the following reasoning.
Let $N$ be the number of events, $K$ be the number of threads and $C$ be the number of critical sections (acquire-release pairs).
Then, computation of (additional) lock dependencies takes time $O(N \cdot K \cdot C)$ because in each step ($1,...,N$)
in \cref{sec:general-lockset-algo} we might have $O(C)$ lock dependency candidates to consider and each vector clock operation takes time $O(K)$.
Generally, $K$ and $C$ are treated like constants. Thus, computation of lock dependencies takes time $O(N)$.
Under this assumption we can also argue that computation of vector clocks to compute partial orders $\toLt[T]$, $\lwLt[T]$ and $\roLt[T]$ takes time $O(n)$.
See~\cite{conf/pldi/KiniMV17} where the WCP algorithm is shown to have linear (in $N$) time complexity.
Computation of \RO\ vector clocks is similar to WCP vector clocks as for both cases we need to track (standard) critical sections
and check if they contain some conflicts.

\section{Experiments}
\label{sec:experiments}

\newcommand{\VOne}{\mbox{SPDOfflineCPP}\xspace}
\newcommand{\VTwo}{\mbox{SPDOfflineCPP$^{\LWLHeldT}$}\xspace}
\newcommand{\VFour}{\mbox{SPDOfflineCPP$^{\ROLHeldT}$}\xspace}

\newcommand{\Benchmark}{\mbox{\bf Benchmark}}
\newcommand{\TT}{\mbox{${\mathcal T}$}}
\newcommand{\EE}{\mbox{${\mathcal E}$}}
\newcommand{\MM}{\mbox{${\mathcal M}$}}
\newcommand{\LL}{\mbox{${\mathcal L}$}}
\newcommand{\Sum}{\mbox{$\sum$}}

\newcommand{\Cycles}{\mbox{Dlk}}      
\newcommand{\Dependencies}{\mbox{Deps}}
\newcommand{\DepsGuarded}{\mbox{Grds}}
\newcommand{\DepsExtras}{\mbox{E}}
\newcommand{\Time}{\mbox{Time}}
\newcommand{\PhaseOne}{\mbox{P1}}
\newcommand{\PhaseTwo}{\mbox{P2}}
\newcommand{\Races}{\mbox{Races}}
\newcommand{\Guards}{\mbox{Guards}}

\newcommand{\DepExtra}{\mbox{D+}}
\newcommand{\GrdExtra}{\mbox{G+}} 

\begin{table*}[p]
  \caption{
    Columns~2--4 show the number of deadlocks,
    number of lock dependencies and running times
    for \VOne. Times are in seconds, and rounded to the nearest hundreth.
    Column~5--8 show the number of deadlocks,
    number of additionally discovered lock dependencies (\DepExtra),
    number of additional guard locks (\GrdExtra) and running times
    for \VTwo.
    Columns~9--12 show the same for \VFour.
    Running times are factors compared to \VOne.
  }
  \label{tbl:benchmarks}

  \begin{tabular}{|r||r|r|r||r|r|r|r||r|r|r|r|}
 \hline
1 & 2 & 3 & 4 & 5 & 6 & 7 & 8 & 9 & 10 & 11 & 12
 \\
 \hline
\multirow{2}{*}{\Benchmark} &
\multicolumn{3}{c||}{\VOne} &
\multicolumn{4}{c||}{\VTwo} &
\multicolumn{4}{c|}{\VFour} \\ \cline{2-4} \cline{5-8} \cline{9-12}
& \Cycles &  \Dependencies & \Time
& \Cycles & \DepExtra & \GrdExtra & \Time
& \Cycles & \DepExtra & \GrdExtra & \Time
 \\
 \hline
Test-Dimminux & 2 & 7 & 0.00 & 2 & 0 & 0 & 1x & 2 & 0 & 0 & 1x  \\   \hline
StringBuffer & 1 & 3 & 0.00 & 1 & 0 & 0 & 1x & 1 & 0 & 0 & 1x  \\   \hline
Test-Calfuzzer & 1 & 5 & 0.00 & 1 & 0 & 0 & 1x & 1 & 0 & 0 & 1x  \\   \hline
DiningPhil & 1 & 25 & 0.00 & 1 & 0 & 0 & 1x & 1 & 0 & 0 & 1x  \\   \hline
HashTable & 0 & 42 & 0.00 & 0 & 0 & 0 & 1x & 0 & 0 & 0 & 1x  \\   \hline
Account & 0 & 12 & 0.00 & 0 & 0 & 0 & 1x & 0 & 0 & 0 & 1x  \\   \hline
Log4j2 & 0 & 3 & 0.00 & 0 & 0 & 0 & 1x & 0 & 0 & 0 & 1x  \\   \hline
Dbcp1 & 1 & 6 & 0.00 & 1 & 0 & 0 & 1x & 1 & 0 & 0 & 1x  \\   \hline
Dbcp2 & 0 & 18 & 0.00 & 0 & 0 & 0 & 1x & 0 & 0 & 0 & 1x  \\   \hline
Derby2 & 0 & 0 & 0.00 & 0 & 0 & 0 & 1x & 0 & 0 & 0 & 1x  \\   \hline
elevator & 0 & 0 & 0.39 & 0 & 0 & 0 & 1x & 0 & 0 & 0 & 1x  \\   \hline
hedc & 0 & 4 & 0.84 & 0 & 0 & 0 & 1x & 0 & 0 & 0 & 1x  \\   \hline
JDBCMySQL-1 & 0 & 3\mbox{K} & 0.80 & 0 & 0 & 0 & 1x & 0 & 0 & 0 & 1x  \\   \hline
JDBCMySQL-2 & 0 & 3\mbox{K} & 0.81 & 0 & 0 & 0 & 1x & 0 & 0 & 0 & 1x  \\   \hline
JDBCMySQL-3 & 1 & 3\mbox{K} & 0.82 & 1 & 0 & 0 & 1x & 1 & 0 & 0 & 1x  \\   \hline
JDBCMySQL-4 & 1 & 3\mbox{K} & 0.80 & 1 & 0 & 0 & 1x & 1 & 0 & 0 & 1x  \\   \hline
cache4j & 0 & 10\mbox{K} & 1.35 & 0 & 0 & 0 & 1x & 0 & 0 & 0 & 1x  \\   \hline
ArrayList & 4 & 8\mbox{K} & 4.56 & 4 & 0 & 0 & 1x & 4 & 0 & 0 & 3x  \\   \hline
IdentityHashMap & 1 & 80 & 5.22 & 1 & 0 & 0 & 1x & 1 & 0 & 0 & 3x  \\   \hline
Stack & 3 & 95\mbox{K} & 5.59 & 3 & 0 & 0 & 1x & 3 & 0 & 0 & 5x  \\   \hline
LinkedList & 4 & 7\mbox{K} & 6.53 & 4 & 0 & 0 & 1x & 4 & 0 & 0 & 2x  \\   \hline
HashMap & 1 & 4\mbox{K} & 6.93 & 1 & 0 & 0 & 1x & 1 & 0 & 0 & 2x  \\   \hline
WeakHashMap & 1 & 4\mbox{K} & 7.08 & 1 & 0 & 0 & 1x & 1 & 0 & 0 & 2x  \\   \hline
Vector & 1 & 200\mbox{K} & 6.76 & 1 & 0 & 0 & 1x & 1 & 0 & 0 & 1x  \\   \hline
LinkedHashMap & 1 & 4\mbox{K} & 8.54 & 1 & 0 & 0 & 1x & 1 & 0 & 0 & 2x  \\   \hline
montecarlo & 0 & 0 & 16.59 & 0 & 0 & 0 & 1x & 0 & 0 & 0 & 1x  \\   \hline
TreeMap & 1 & 4\mbox{K} & 17.14 & 1 & 0 & 0 & 1x & 1 & 0 & 0 & 1x  \\   \hline
hsqldb & 0 & 125\mbox{K} & 41.36 & 0 & 0 & 0 & 1x & 0 & 0 & 0 & 1x  \\   \hline
sunflow & 0 & 248 & 49.49 & 0 & 0 & 0 & 1x & 0 & 0 & 0 & 1x  \\   \hline
jspider & 0 & 2\mbox{K} & 52.92 & 0 & 0 & 0 & 1x & 0 & 0 & 0 & 1x  \\   \hline
tradesoap & 0 & 40\mbox{K} & 122.73 & 0 & 51 & 104 & 1x & 0 & 51 & 104 & 1x  \\   \hline
tradebeans & 0 & 40\mbox{K} & 122.98 & 0 & 43 & 88 & 1x & 0 & 43 & 88 & 1x  \\   \hline
TestPerf & 0 & 0 & 141.04 & 0 & 0 & 0 & 1x & 0 & 0 & 0 & 1x  \\   \hline
Groovy2 & 0 & 29\mbox{K} & 277.97 & 0 & 3\mbox{K} & 33\mbox{K} & 1x & 0 & 3\mbox{K} & 33\mbox{K} & 1x  \\   \hline
tsp & 0 & 0 & 739.49 & 0 & 0 & 0 & 1x & 0 & 0 & 0 & 1x  \\   \hline
lusearch & 0 & 41\mbox{K} & 541.82 & 0 & 0 & 0 & 1x & 0 & 0 & 0 & 1x  \\   \hline
biojava & 0 & 545 & 512.02 & 0 & 0 & 0 & 1x & 0 & 0 & 0 & 1x  \\   \hline
graphchi & 0 & 82 & 581.99 & 0 & 2 & 2 & 1x & 0 & 2 & 2 & 1x  \\   \hline
 \hline
\Sum & 25 & 626\mbox{K} & 3274.59 & 25 & 3\mbox{K} & 33\mbox{K} & 1x & 25 & 3\mbox{K} & 34\mbox{K} & 1x
 \\  \hline   \end{tabular}

\end{table*}

We evaluated our approach in an offline setting using a large dataset
of pre-recorded program traces from prior work
plus some traces derived from programs such as
the one in~\cref{fig:real}.

\paragraph{Test candidates.}
For experimentation, we consider the following three test candidates:
\begin{description}
\item[\VOne] is our reimplementation of \SPDOffline\ in C++~\footnote{The original \SPDOffline\ is implemented in Java. As we had to reimplement the first phase of \SPDOffline\ to
  integrate refined deadlock patterns, we decided to go for C++ based on personal preference.}
  \VOne\ uses standard deadlock patterns like \SPDOffline.
\item[\VTwo] makes use of \LW\ deadlock patterns in the first phase.
  The second phase is identical to \VOne.
\item[\VFour] makes use of \RO\ deadlock patterns in the first phase.
  The second phase is again identical to \VOne.
\end{description}

\paragraph{Benchmarks and system setup.}

Our (performance) experiments are based on a large set of benchmark traces from
prior work~\cite{conf/pldi/TuncMPV23,10.1145/3503222.3507734}.
These traces are obtained from standard benchmark suites such as
Java Grande~\cite{smith2001parallel}, Da
Capo~\cite{Blackburn:2006:DBJ:1167473.1167488}, and IBM
Contest~\cite{farchi2003concurrent}.
We conducted our experiments on an Apple M4 max CPU with 48GB of RAM
running macOS Sequoia (Version 15.1).

Compared to~\cite{conf/pldi/TuncMPV23}
we excluded five benchmarks (RayTracer, jigsaw, Sor, Swing, eclipse),
because their traces are not well-formed.
The issue has been confirmed by~\citet{conf/pldi/TuncMPV23}.
For space reasons we also had to exclude some of the smaller benchmarks.

\cref{tbl:benchmarks} shows the results
for the standard benchmark traces
that are also used by earlier works~\cite{conf/pldi/TuncMPV23}.

\paragraph{Performance.}

The overall running times of \VOne, \VTwo and \VFour are about the same.
For \VFour there are seven benchmarks (ArrayList -- WeakHashMap, LinkedHashMap)
where we encounter an increase by a factor of two to five.
These benchmarks have a large number of 801~threads.
Detailed benchmark statistics are given in~\cref{sec:bench-details}.
The computation of \RO\ vector clocks is affected by a large number of threads
because we need to track conflicts among critical sections
in different threads to enforce rule~(2) in~\cref{def:release-order}.

\paragraph{Precision.}

For the standard benchmark traces,
all three candidates yield the exact same number of deadlocks.
For four benchmarks (tradesoap, tradebeans, Groovey2, graphchi)
the lock set constructions based on~\LW\ and~\RO\ detect additional locks held
and also reports additional lock dependencies.
Columns~6 and~10 report the additional number of lock dependencies
for \VTwo and \VFour relative to \VOne.
Columns~7 and~11 report the additional number of (guard) locks
that are in $\LWLHeldT$ and $\ROLHeldT$ compared to $\TOLHeldT$
when computing lock dependencies.
For benchmark Groovey2, \VFour reports a few more guard locks and lock dependencies.
Due to rounding the difference can only be observed
when comparing the overall numbers.

To show improvements in terms of precision, we applied all three test
candidates on traces derived some C programs.
The trace derived from the program in~\cref{fig:real}
effectively corresponds to the trace~\cref{fig:pseudoDPFalseNegative}.
For this trace, candidates \VTwo and \VFour report (correctly) a deadlock
whereas \VOne fails to detect the deadlock.
It is easy to derive variations of~\cref{fig:pseudoDPFalseNegative}
including the corresponding traces to show that \VFour detects more deadlocks than
\VTwo. Such traces are similar to trace~\traceRef{TraceCSRO}
in~\cref{fig:ex-cs-ro}. See \cref{sec:extra-traces}.

\section{Related Work}
\label{sec:rw}

\citet{10.5555/645880.672085} and
\citet{DBLP:conf/spin/Harrow00}
were the first to identify deadlocking situations via a circularity in the lock order dependency among threads.
\citet{conf/hvc/BensalemH05}
make explicit use of lock sets to formalize lock-order graphs that capture the lock order dependency for the threads.
\citet{conf/pldi/JoshiPSN09}
introduce lock dependencies on a per-thread basis.
The advantage compared to lock-order graphs is that locks
resulting from the same thread can be more easily detected
compared to the lock-order graph representation.
A deadlock warning is issued if we find
a cyclic chain of lock dependencies
satisfying conditions \textbf{DP-Cycle} and \textbf{DP-Guard}.
There are numerous works~\cite{10.1145/3377811.3380367,Samak:2014:TDD:2692916.2555262,conf/ase/ZhouSLCL17,6718069,conf/oopsla/KalhaugeP18,conf/fse/CaiYWQP21,conf/pldi/TuncMPV23} that are built upon this idea.
All of the above works rely on the standard lock lock set construction.

The technical report~\cite{report/MathurPTV23}
of the conference paper~\cite{conf/pldi/TuncMPV23}
gives a Java program that corresponds
to~\cref{fig:pseudoDPFalsePositive} to show that
the DIRK deadlock predictor~\cite{conf/oopsla/KalhaugeP18} yields false positives
but does not define multi-thread critical sections,
does not analyze false negatives, and
does not propose corrected definitions of critical sections and lock sets.

\section{Conclusion}
\label{sec:concl}

Pretty much all prior dynamic deadlock detection methods assume per-thread lock sets
and critical sections restricted to a single thread. This assumption is simply false.
Critical sections that cover multiple threads exist,
can be observed in simple programs and invalidate the standard foundation.
We give corrected definitions based on a trace-based characterization of critical sections and
lock sets to deal with programs where locks protect events across thread boundaries.

We introduce some underapproximations of the general construction
that can be implemented efficiently via the use of partial order methods.
We study their theoretical properties such as the impact
on the precision of lock set-based deadlock precision.
We can guarantee that predictable deadlock patterns
employing the partial order-based constructions are still true positives (\cref{le:stuckness}).

Our contribution is orthogonal to any particular deadlock predictor.
Computation of more general deadlock patterns as described
by~\cref{alg:lock-deps} can be integrated into
iGoodLock~\cite{conf/pldi/JoshiPSN09},
UNDEAD~\cite{conf/ase/ZhouSLCL17},
SeqCheck~\cite{conf/fse/CaiYWQP21} and others.
We choose the \SPDOffline\ deadlock predictor
because~\cref{alg:dp-syncp} removes false positives,
and \SPDOffline\ is the standard sound predictor operating under this model.
Hence, we integrate the partial order-based constructions into
the \SPDOffline\ deadlock predictor.
The resulting extensions of \SPDOffline\ remain sound (no false positives)
and we are able to cover a larger class of deadlocks (fewer false negatives).
Our  extensive set of performance benchmarks highlight two key findings:
Generalized lock sets are practically feasible with no performance penalty, and
additional lock dependencies do arise, even though they do not yield additional
deadlocks on this benchmark set.
For traces derived from C programs we can show that the precision (fewer false negatives) is improved.

There are several avenues for future work.
We follow the standard fixed-reads-from assumption
under the sequential consistency model used in all recent
predictive analyses; e.g.~consider~\cite{Smaragdakis:2012:SPR:2103621.2103702,conf/pldi/KiniMV17,conf/fse/CaiYWQP21,conf/pldi/TuncMPV23}.
Tracking (written/read) values and covering more relaxed memory models
is an interesting topic for future work
but requires substantial changes to the theory and implementation.
Other directions include the development of
further partial orders that are suitable for our purposes
and to investigate the limitations
of per-thread lock sets in the context of data race prediction methods.

\begin{acks}

We thank some ESOP’26 and OOPSLA'26 referees for their comments on a previous version of this paper.

\end{acks}



\clearpage
\appendix

\section{Proofs}
\label{sec:Proofs}

\subsection{Proof of \cref{le:stuckness}}

\stuckness*

\begin{proof}
  Condition \textbf{WF-Req} in \cref{sec:prelim:wf}
  guarantees that
  (a) each request~$q_i$ is immediately succeeded
  by some acquire~$a_i$ or
  (b) $q_i$ is the final event in thread~$\thd(q_i)$ in trace~$T$.
  In case of (b), we simply assume some acquire~$a_i$ that fulfills the request $q_i$.

  For brevity we only consider the case $\SET{q_1,q_2}$.
  W.l.o.g., $q_i = (\_, \reqE{l_i})$.
  Then, the matching acquire is of the following form $a_i = (\_,\lockE{l_i})$.
  The thread ids do not matter here. So, we ignore them by writing $\_$.

  By assumption (1) $l_1 \in \GeneralLHeld{T}{q_2}$ and (2) $l_2 \in \GeneralLHeld{T}{q_1}$.

  From (1) and (2) we derive that
  (1a) $a \crpLt[T] q_2$, (1b) $q_2 \crpLt[T] r$,
  (2a) $a' \crpLt[T] q_1$ and (2b) $q_1 \crpLt[T] r'$
  where $a=(\_,\lockE{l_1})$, $r=(\_,\unlockE{l_1})$ are the acquire/release events
  that hold lock~$l_1$ and
  $a'=(\_,\lockE{l_2})$, $r'=(\_,\unlockE{l_2})$ are the acquire/releave events that
  hold lock~$l_2$.

  W.l.o.g., we can assume that (3) $T'=T_1 \conc [q_1,q_2]$ for some $T_1$.
  Requests $q_i$ are final. So, we can swap them with a later event in the trace
  and reach the form~(3).

  Suppose there exists $T''$ such that
  $T' \conc T'' \conc [a_1]$ is well-formed.
  From (1a) and (3) we derive that $a \in T'$ and $a \TrLt[T'] q_2$.
  But this means that the release event $r$ that corresponds to~$a$
  must occur before~$a_1$ (either in $T'$ or $T''$).
  Otherwise, lock semantics is violated.

  Suppose $r \in T'$. Via (1b) we find that $q_2 \in T'$ and $q_2 \TrLt[T'] r$.
  This is a contradiction to (3).
  Hence, $r \in T''$ where we find (4) $T'' = T_2 \conc [r] \conc T_3$ for some $T_2$ and $T_3$.

  From (3) and (4) we arrive at the well-formed trace
  $$
 (5) ~~    T' \conc T'' \conc [a_1] = T_1 \conc [q_1,q_2] \conc T_2 \conc [r] \conc T_3 \conc [a_1]
  $$
  We continue by distinguishing among the following cases.
  Each case exploits some assumption to move a request towards the end of the trace.
  Each case leads to a contradiction.

  \textbf{Case $a_2 \not \in T_2 \conc [r] \conc T_3$:}

  This means that the acquire~$a_2$ that fulfills~$q_2$ is not present.
  Under this assumption, request $q_2$ can be swapped with later events in the trace
  while maintaining well-formedness.
  Hence, from (5) we obtain the well-formed trace
  $T''' = T_1 \conc [q_1] \conc T_2 \conc [r] \conc T_3 \conc [a_1,q_2]$.
  This shows that $r \TrLt[T'''] q_2$ and contradicts (1b).

  \textbf{Case $a_2 \in T_2 \conc [r] \conc T_3$:}

  Under this assumption we find that in the well-formed trace
  $T_1 \conc [q_1,q_2] \conc T_2 \conc [r] \conc T_3 \conc [a_1]$
  either $a_2 \in T_2$ or $a_2 \in T_3$.

  From (2a) and (3) we find that $a' \in T_1$ and $a' \TrLt[T_1] q_2$.
  But this means that the release event $r'$ that corresponds to $a'$
  must occur before $a_2$. Otherwise, lock semantics is violated.

  Suppose $a_2 \in T_2$. Then, $r'$ before $a_2$ in $T_2$.
  W.l.o.g., $T_2 = [r',a_2'] \conc T_2'$ for some $T_2'$.
  The (well-formed) trace is as follows.
  $$
  T_1 \conc [q_1,q_2] \conc [r',a_2'] \conc T_2' \conc [r] \conc T_3 \conc [a_1]
  $$
  Request~$q_1$ can be moved closer to its fulfilling acquire~$a_1$ while maintaining
  well-formedness. In particular, request~$q_1$ can be moved past~$r'$.
  This contradicts (2b).

  The other case where $a_2 \in T_3$ applies similar arguments
  and also leads to a contradiction.

  We summarize.
  All cases lead to a contradiction
  and therefore there is no  $T''$ such that
  $T' \conc T'' \conc [a_1]$ is well-formed.
  Similar argument applies to $a_2$.
\end{proof}

\subsection{Proof of \cref{le:mhb-criteria-inclusion}}

\mhbcriteriainclusion*

\begin{proof}
  Suppose $l \in \LH{\pLt[T]}{e}$.
  Then, $e \in \CS{\pLt[T]}{a,r}$.
  By condition \textbf{\PO-CS-Enclosed}
  we have that $a \pLt[T] e$ and $e \pLt[T] r$.
  By \textbf{MHB-Criteria} all $T' \in \crp(T)$ we find
  that $a \TrLt[T'] e$ and $e \TrLt[T'] r$.
  This shows that condition \textbf{CS-Enclosed} holds.
  Condition \textbf{\PO-CS-Match} and \textbf{CS-Match} are the same.
  Hence, $e \in \GeneralCSect{T}{a,r}$
  and we can conclude that $l \in \GeneralLHeld{T}{e}$.
\end{proof}

\subsection{Proof of \cref{le:pd-mhb}}

\lepdmhb*

\begin{proof}
We prove the result by contradiction.
For brevity, we only consider the case
for deadlock patterns are of size two.

We assume that
(1) $\SET{q_1,q_2} \in \AllPD{\LHH{\pLt[T]}}$ and
(2) $\SET{q_1,q_2} \not\in \AllPD{\GeneralLHeldd{T}}$.

The `witness' cannot be the issue.
Hence, we have that
(3) $\SET{q_1,q_2} \in \AllDP{\LHH{\pLt[T]}}$ and
(4) $\SET{q_1,q_2} \not\in \AllDP{\GeneralLHeldd{T}}$.
where (4) does not hold because \textbf{DP-Cycle} or \textbf{DP-Guard} are not satisfied.
Condition \textbf{DP-Cycle} can not be responsible because of \cref{le:mhb-criteria-inclusion}.
Hence, the issue is \textbf{DP-Guard}.

There must exist (3) $l \in \LH{\pLt[T]}{q_1} \indexedcap \LH{\pLt[T]}{q_2}$.
This means that there exist two distinct acquires~$a$ and~$a'$
on lock~$l$ such that
(4) $a \pLt[T] q_1$, $a' \pLt[T] q_2$, $q_1 \pLt[T] r$ and $q_2 \pLt[T] r'$
where $r,r'$ are the matching release partners of $a,a'$.

From (1) we conclude that there exists a witness $T'\in \crp(T)$
where $q_1$ and $q_2$ are final in $T'$.
From (4) and the \textbf{MHB-Criteria} assumption
we conclude that $a$ and $a'$ appear before $q_1$ and $q_2$ in $T'$
and therefore we find that (5) $a \TrLt[T'] q$ and $a' \TrLt[T'] q$ for $q \in \{q_1,q_2\}$.
To ensure that lock semantics is not violated
at one of the release events $r,r'$ must appear in between $a$ and $a'$.

W.l.o.g., we assume $a$ appears before $a'$
and then we find that $a \TrLt[T'] r \TrLt[T'] a'$.
This leads to a contraction via (4)
and the assumption that $\pLt[T]$ satisfies the \textbf{MHB-Criteria}.
\end{proof}

\subsection{Proof of \cref{le:ro-mhb}}

\leromhb*

\begin{proof}
  (0) $\toLTSym \subseteq \lwLTSym \subseteq \roLTSym$ holds by definition.

  Below, we show that $\roLTSym$ satisies the \textbf{MHB-Criteria}.
  From (0) we can then immediately conclude that
  $\toLTSym$ and $\lwLTSym$ satisfy the \textbf{MHB-Criteria}
  as well.

We show that $\roLTSym$ satisies the \textbf{MHB-Criteria}
by induction over the derivation~$\roLTSym$.

For subcases thread-order and last-write order the \textbf{MHB-Criteria} is immediately met.

Consider $r \roLt[T] f$ if
(1) $e \in \TOCSect{T}{\AcqRelPair{a}{r}}{l}$ and
  (2) $f\in \TOCSect{T}{\AcqRelPair{a'}{r'}}{l}$
and (3) $e \lwLt[T] f$.

Suppose $T' \in \crp(T)$ where $f \in T'$.
From (1) and by induction we derive that (4) $a \TrLt[T'] e \TrLt[T'] r$.
Similarly, we find that (5) $a' \TrLt[T'] f \TrLt[T'] r'$
and (6) $e \TrLt[T'] f$.
From (4) and (6) we derive that acquire~$a$ must appear before~$f$ in $T'$.
Trace $T'$ is well-formed.

Acquires~$a$, $a'$ and releases~$r$, $r'$ operate on the same lock~$l$.
Hence, release~$r$ must appear before~$a'$ as otherwise lock semantics is violated.

This shows that $r\in T'$ and $r \TrLt[T'] a' \TrLt[T'] f$
and concludes the proof.
\end{proof}

\section{Sync-Preserving Deadlocks}

We repeat some definitions that are introduced by~\citep{conf/pldi/TuncMPV23}.

\begin{definition}[Sync Preservation~{\citep[Definition~1]{conf/pldi/TuncMPV23}}]
  We say $T' \in \crp(T)$ is \emph{sync preserving} if,
  for every distinct acquires~$a,a' \in T'$ such that $\lock(a) = \lock(a')$,
  $a \TrLt[T'] a'$ if and only if~$a \TrLt[T] a'$.
  We define $\spCrp(T) = \{ T' \in \crp(T) \mid \text{$T'$ is sync preserving} \}$.
\end{definition}

\noindent
In words, a correctly reordered prefix is sync preserving if critical sections protected by the same lock appear in source-trace order.
Based on sync-preserving reorderings we can refine
the notion of a predictable deadlock.

We consider a specific class of predictable deadlocks
where the witness is sync-preserving.
As before in \cref{def:pred-deadlock},
the following definition is parametric in terms of lock set function~$L$.

\begin{definition}[Sync-Preserving Predictable Deadlock]
  \label{def:syncp-pred-deadlock}
  Given $\SET{q_1,\dots ,q_n} \in \AllDP{T}{L}$,
  we say $\SET{q_1,\dots ,q_n}$ is a \emph{sync-preserving predictable deadlock} in~$T$ if
  there exists \emph{witness}~$T' \in \spCrp(T)$
  such that each of $q_1,...,q_n$ are final in~$T'$.

  We use notation $\SPD{T}{L}{\SET{q_1,\dots ,q_n}}$ to indicate
  that $\SET{q_1,\dots ,q_n}$ satisfies the above conditions.
\end{definition}
We define the set of sync-preserving predictable deadlocks for trace $T$ induced by $L$ as follows:
$\AllSPD{T}{L} = \{ \SET{q_1,\dots ,q_n} \mid \SPD{T}{L}{\SET{q_1,\dots ,q_n}} \}$.

\begin{figure}[t]

  \begin{subfigure}[t]{0.44\textwidth}
 \bda{l}
\ba{|l|l|l|l|l|}
\hline
\traceNew{SyncDLExample} & \thread{1} & \thread{2} & \thread{3} & \thread{4}\\ \hline
\eventE{1}  & \lockE{\LKA}&&&\\
\eventE{2}  & \writeE{\VA}&&&\\
\eventE{3}  & &\readE{\VA}&&\\
\eventE{4}  & &\reqLockE{\LKB}&&\\
\eventE{5}  & &\lockE{\LKB}&&\\
\eventE{6}  & &\unlockE{\LKB}&&\\
\eventE{7}  & &\writeE{\VB}&&\\
\eventE{8}  & \readE{\VB}&&&\\
\eventE{9}  & \unlockE{\LKA}&&&\\
\eventE{10}  & &&\lockE{\LKA}&\\
\eventE{11}  & &&\unlockE{\LKA}&\\
\eventE{12}  & &&&\lockE{\LKB}\\
\eventE{13}  & &&&\reqLockE{\LKA}\\
\eventE{14}  & &&&\lockE{\LKA}\\
\eventE{15}  & &&&\unlockE{\LKA}\\
\eventE{16}  & &&&\unlockE{\LKB}\\

\hline \ea{}
\\ \mbox{}
\\ \mbox{Sync-preserving witness}
\\
\ba{|l|l|l|l|l|}
\hline
\traceNew{SyncDLExampleWitness} & \thread{1} & \thread{2} & \thread{3} & \thread{4}\\ \hline
\eventE{1}  & \lockE{\LKA}&&&\\
\eventE{2}  & \writeE{\VA}&&&\\
\eventE{3}  & &\readE{\VA}&&\\
\eventE{12}  & && \hphantom{\lockE{\LKB}} &\lockE{\LKB}\\
\eventE{4}  & &\reqLockE{\LKB}&&\\
\eventE{13}  & &&&\reqLockE{\LKA}\\

 \hline \ea{}

 \eda
  \caption{Sync-preserving deadlock.}
 \label{fig:syncp-dl}
  \end{subfigure}
  \qquad
  \begin{subfigure}[t]{0.44\textwidth}
    \bda{l}
\ba{|l|l|l|l|}
\hline
\traceNew{NotSyncDLExample} & \thread{1} & \thread{2} & \thread{3}\\ \hline
\eventE{1}  & \lockE{\LKA}&&\\
\eventE{2}  & \writeE{\VA}&&\\
\eventE{3}  & &\readE{\VA}&\\
\eventE{4}  & &\reqLockE{\LKB}&\\
\eventE{5}  & &\lockE{\LKB}&\\
\eventE{6}  & &\unlockE{\LKB}&\\
\eventE{7}  & &\writeE{\VB}&\\
\eventE{8}  & \readE{\VB}&&\\
\eventE{9}  & \unlockE{\LKA}&&\\
\eventE{10}  & &&\lockE{\LKA}\\
\eventE{11}  & &&\unlockE{\LKA}\\
\eventE{12}  & &&\lockE{\LKB}\\
\eventE{13}  & &&\reqLockE{\LKA}\\
\eventE{14}  & &&\lockE{\LKA}\\
\eventE{15}  & &&\unlockE{\LKA}\\
\eventE{16}  & &&\unlockE{\LKB}\\

\hline \ea{}
\\ \mbox{}
\\ \mbox{Witness but not sync-preserving}
\\
\ba{|l|l|l|l|}
\hline
\traceNew{NotSyncDLExampleWitness} & \thread{1} & \thread{2} & \thread{3}\\ \hline
\eventE{10}  & &&\lockE{\LKA}\\
\eventE{11}  & &&\unlockE{\LKA}\\
\eventE{1}  & \lockE{\LKA}&&\\
\eventE{2}  & \writeE{\VA}&&\\
\eventE{3}  & &\readE{\VA}&\\
\eventE{12}  & &&\lockE{\LKB}\\
\eventE{4}  & &\reqLockE{\LKB}&\\
\eventE{13}  & &&\reqLockE{\LKA}\\

\hline \ea{}
 \eda
  \caption{Predictable but not sync-preserving deadlock.}
 \label{fig:non-syncp-dl}
  \end{subfigure}
    \caption{Not all predictable deadlocks are sync-preserving.}
\label{fig:syncp-and-non-syncp-dls}
\end{figure}

\traceRef{SyncDLExample} in \cref{fig:syncp-dl}
contains a sync-preserving (predictable) deadlock.
We find that $\SET{e_4, e_{13}} \in \AllSPD{\traceRef{SyncDLExample}}{\MultiLHeldIdxT}$.
However, not all deadlocks are sync-preserving
as shown in~\cref{fig:non-syncp-dl}.

\section{Sync-Preserving Deadlock Prediction}

\begin{definition}[Sync-preserving Closure~{\citep[Definition~3]{conf/pldi/TuncMPV23}}]
  \label{def:spClosure}
  Let $T$ be a trace and $S \subseteq \evts(T)$.
  The \emph{sync-preserving closure} of~$S$ in~$T$, denoted $\spClosure[T](S)$, is the smallest set $S'$ such that
 \textbf{(SPC-Closure)} $S \subseteq S''$,
 \textbf{(SPC-LW)} for every~$e, f \in T$ such that $e \lwLt[T] \vEvtB$, $f \in S'$ implies~$e \in S'$,
      and
  \textbf{(SPC-SP)} for every distinct acquires~$a, a' \in T$ with~$\lock(a) = \lock(a')$, $a \TrLeq[T] a'$ implies~$\rel[T](a) \in S'$.
\end{definition}

\noindent
We generally assume that $S$ equals the set $\SET{q_1,\dots ,q_n}$ of requests that characterize
a deadlock pattern.
\textbf{SPC-Closure} makes sure that at the least the requests are included.
\textbf{SPC-LW} ensures that thread order and last-write dependencies are satisfied.
\textbf{SPC-SP} makes sure that sync-preservation holds
by ensuring that earlier critical sections are closed before later ones are opened.

The following result shows that the sync-preserving closure determines exactly the events necessary to form a sync-preserving reordering.

\begin{lemma}[{\citep[Lemma~4.1]{conf/pldi/TuncMPV23}}]
  \label{lem:spClosure}
  Let $T$ be a trace and let $S \subseteq \evts(T)$.
  There exists~$T' \in \spCrp(T)$ such that $\evts(T') = \spClosure[T](S)$.
\end{lemma}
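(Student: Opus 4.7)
The plan is to build $T'$ explicitly as the restriction of $T$ to the closure set $S' = \spClosure[T](S)$, keeping the relative order from $T$. That is, let $T' = [e \in T \mid e \in S']$, processed in the order events appear in $T$. This single construction is a candidate for every property we need, so the proof becomes a sequence of verifications: well-formedness, $\crp$ membership, and sync-preservation.

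First I would dispatch the easy conditions. For CRP-PO, restricting $T$ to a subset trivially preserves the per-thread projection, so $\proj(T',t)$ prefixes $\proj(T,t)$ for every thread $t$. For sync-preservation, given two acquires $a,a' \in T'$ on the same lock, their order in $T'$ equals their order in $T$ by construction, so the required equivalence holds. For CRP-LW, pick any read $f \in T'$ with $e = \lw[T](f)$; SPC-LW forces $e \in S'$, hence $e \in T'$, and since no intervening write on the same variable appears between $e$ and $f$ in $T$, none can appear in the subsequence $T'$, giving $\lw[T'](f) = e$.

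The main obstacle is showing $T'$ is well-formed, because the closure conditions do not state anything about acquires or requests directly — they only speak of last writes and releases of earlier acquires. The critical observation is a small auxiliary lemma: whenever the closure pulls in a release $\rel[T](a)$ via SPC-SP applied to a pair $a \TrLeq[T] a'$ with $a' \in S'$, one must argue (e.g.\ by minimality of $S'$ or a suitable strengthening of the closure) that $a \in S'$ as well, and that the matching acquire of any release already in $S'$ is also included. Granting that, WF-Rel holds because each release in $T'$ has its matching acquire in $T'$ with nothing on the same lock in between (inherited from $T$). WF-Acq is precisely the purpose of SPC-SP: if $a, a' \in T'$ are two acquires on the same lock with $a \TrLt[T'] a'$, then $\rel[T](a) \in S'$, and since $a \TrLt[T] \rel[T](a) \TrLt[T] a'$ in $T$, the same ordering persists in the restriction $T'$. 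WF-Req follows from the per-thread projection argument combined with the fact that a request is immediately preceded by its acquire in its own thread in $T$ and so in $T'$.

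Finally I would verify $\evts(T') = S'$ by construction and observe that $T'$ is a prefix-like reordering because we never insert events outside $S'$. The main subtlety to iron out is the one flagged above: the informal reading of SPC-SP in the statement (``for every distinct acquires $a, a' \in T$ with $\lock(a)=\lock(a')$, $a \TrLeq[T] a'$ implies $\rel[T](a) \in S'$'') must be read with the implicit premise $a' \in S'$; otherwise the closure is not minimal in the intended sense. Making that reading precise and proving the induced closure property that acquires of included releases are themselves included is the one genuine content step; everything else is bookkeeping on subsequences.
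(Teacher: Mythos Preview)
The paper does not prove this lemma: it is quoted verbatim as \citep[Lemma~4.1]{conf/pldi/TuncMPV23} and used as a black box, so there is no in-paper proof to compare against. Your construction---restrict $T$ to $S'=\spClosure[T](S)$ in trace order---is the standard one and is essentially what the cited source does.

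Two small corrections to your argument are worth noting. First, \textbf{CRP-PO} is not ``trivial'' from taking a subsequence: restricting to an arbitrary subset can leave gaps in a thread's projection. What makes it a prefix is \textbf{SPC-LW}, because $\toLt[T]\subseteq\lwLt[T]$, so $S'$ is downward closed under thread order; you should invoke this explicitly. Second, the ``auxiliary lemma'' you single out as the one genuine content step---that the acquire matching an included release is itself included---is in fact immediate from the same observation: $\acq[T](r)$ and $r$ are in the same thread by \textbf{WF-Rel}, hence $\acq[T](r)\lwLt[T] r$, and \textbf{SPC-LW} pulls the acquire in. So there is no separate lemma to prove; once you state the downward closure under $\lwLt[T]$ cleanly, both \textbf{CRP-PO} and your auxiliary fact drop out, and the rest of your verification goes through as written (modulo the slip in \textbf{WF-Req}: the request precedes the acquire, not the other way around).
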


This then leads to an elegant method to verify if a deadlock pattern
implies a sync-preserving deadlock.
Build the sync-preserving closure of requests
and check that none of the matching acquires are included.

\begin{lemma}
  \label{le:multi-sync-sound}
  Let $\DP{T}{L}{\SET{q_1,\dots ,q_n}}$ be a deadlock pattern.
  $\SET{q_1,\dots ,q_n}$ is a sync-preserving predictable deadlock iff
  $\spClosure[\vTr](\SET{q_1,\dots ,q_n}) \cap \{ \acq[T](q) \mid q \in \SET{q_1,\dots ,q_n} \} = \emptyset$.
\end{lemma}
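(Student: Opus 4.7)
\medskip
\noindent
\textbf{Proof proposal.} The plan is to prove the two directions of the biconditional separately, using Lemma~\ref{lem:spClosure} as the main tool that turns reasoning about sync-preserving reorderings into reasoning about the sync-preserving closure. Write $S = \SET{q_1,\dots,q_n}$ and $a_i = \acq[T](q_i)$. In both directions the bridge between ``$q_i$ is final'' and ``$a_i$ is absent'' is condition \textbf{WF-Req}: the only event that may appear immediately after $q_i$ in thread $\thd(q_i)$ in any well-formed trace is $a_i$, so (via \textbf{CRP-PO}) $q_i$ is final in a reordered prefix iff $a_i$ does not appear in it.

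For the ($\Leftarrow$) direction, assume $\spClosure[T](S) \cap \SET{a_1,\dots,a_n} = \emptyset$. Apply Lemma~\ref{lem:spClosure} to obtain $T' \in \spCrp(T)$ with $\evts(T') = \spClosure[T](S)$. Since $q_i \in S \subseteq \evts(T')$ and $a_i \notin \evts(T')$, \textbf{WF-Req} plus \textbf{CRP-PO} (which forces $\proj(T',\thd(q_i))$ to be a prefix of $\proj(T,\thd(q_i))$) implies that the prefix ends at $q_i$ in that thread. Hence each $q_i$ is final in $T'$, so $T'$ witnesses that $\SET{q_1,\dots,q_n}$ is a sync-preserving predictable deadlock.

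For the ($\Rightarrow$) direction, assume a witness $T'' \in \spCrp(T)$ exists with each $q_i$ final in $T''$, so in particular $a_i \notin \evts(T'')$ for every $i$. I would then argue that $\spClosure[T](S) \subseteq \evts(T'')$, which immediately yields the disjointness claim. To prove the containment I would check, by induction on the derivation in Definition~\ref{def:spClosure}, that $\evts(T'')$ satisfies all three closure clauses when seeded with $S$: \textbf{SPC-Closure} is immediate since $S \subseteq \evts(T'')$; the thread-order part of \textbf{SPC-LW} follows from \textbf{CRP-PO}, and the last-write part from \textbf{CRP-LW}; for \textbf{SPC-SP}, given $a \TrLt[T] a'$ on the same lock with $a' \in \evts(T'')$, use sync-preservation of $T''$ together with \textbf{WF-Acq} and \textbf{WF-Rel} to conclude $a \in \evts(T'')$ and then $\rel[T](a) \in \evts(T'')$.

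The main obstacle I anticipate is the careful handling of \textbf{SPC-SP}, because, in contrast to \textbf{SPC-LW}, sync-preservation of $T''$ does not by itself force every earlier same-lock acquire to be present in $T''$; only their relative order is constrained when both are present. The route I would take is to show that whenever the closure inductively demands $\rel[T](a) \in \evts(T'')$, the assumption that $a \notin \evts(T'')$ leads via \textbf{WF-Acq} applied inside $T''$ (no release of $a$ between two acquires on the same lock, yet $a' \in \evts(T'')$) and sync-preservation to a contradiction, so $a \in \evts(T'')$ must hold and then $\rel[T](a) \in \evts(T'')$ by \textbf{CRP-PO} combined with \textbf{WF-Acq}. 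Once this containment is established, $a_i \notin \evts(T'')$ gives $a_i \notin \spClosure[T](S)$ for each $i$, completing the proof.
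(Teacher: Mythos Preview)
Your overall strategy is sound and is exactly how the result is proved in \cite{conf/pldi/TuncMPV23}, which the paper simply defers to: use Lemma~\ref{lem:spClosure} for the ($\Leftarrow$) direction, and for ($\Rightarrow$) show that the event set of any sync-preserving witness containing $S$ already satisfies the three closure clauses, hence contains $\spClosure[T](S)$ by minimality.

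There is, however, a genuine gap in your handling of \textbf{SPC-SP}, and it is precisely the obstacle you flagged. Your proposed fix --- assuming only $a' \in \evts(T'')$ and then deriving $a \in \evts(T'')$ by contradiction via \textbf{WF-Acq} and sync-preservation --- does not go through. Sync-preservation constrains only the \emph{relative order} of same-lock acquires that are \emph{both} present in $T''$; it does not force an earlier same-lock acquire to appear at all. A simple two-thread trace where $t_1$ does $\acqE{l};\relE{l}$ before $t_2$ does $\acqE{l};\relE{l}$ admits the sync-preserving prefix consisting of $t_2$'s events only, so $a' \in \evts(T'')$ while $a \notin \evts(T'')$, and no well-formedness condition is violated.

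The resolution is that the paper's statement of \textbf{SPC-SP} in Definition~\ref{def:spClosure} contains a typo: in \cite[Definition~3]{conf/pldi/TuncMPV23} the hypothesis is that \emph{both} acquires $a,a'$ lie in $S'$, not merely $a'$. With the correct hypothesis your containment argument becomes immediate: if $a,a' \in \evts(T'')$ are same-lock acquires with $a \TrLt[T] a'$, sync-preservation gives $a \TrLt[T''] a'$, and then \textbf{WF-Acq} applied inside $T''$ forces a release of that lock between them, which by \textbf{WF-Rel} must be $\rel[T](a)$. So your plan is correct once you work from the intended closure definition; just drop the attempted contradiction argument and invoke the stronger hypothesis directly.
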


The above result generalizes the earlier~\citep[Lemma~4.2]{conf/pldi/TuncMPV23}.
Lemma~4.2 only covers the case where $L$ equals the standard lock set function~$\TOLHeldT$.
The proof of Lemma~4.2.~can be naturally extended to
deadlock patterns and sync-preserving predictable deadlock
induced by~$\GeneralLHeldd{T}$ and its approximations.

In combination with results~\cref{le:stuckness} and~\cref{le:pd-to-lw-ro}
this shows that we can cover a larger class of sync-preserving deadlocks
and no false positives are reported.

\section{Benchmark Details}
\label{sec:bench-details}

See \cref{tbl:stats}.

\begin{table*}[t]
    \caption{
    \textbf{Trace statistics}.
    Columns~2--5 contain the number of events, total number of threads.
  }
  \label{tbl:stats}
\begin{tabular}{|r|r|r|r|r|}
 \hline
1 & 2 & 3 & 4 & 5
 \\
 \hline
\Benchmark &
\EE &
\TT &
\MM &
\LL
 \\
 \hline
Test-Dimminux & 50 & 3 & 8 & 6  \\   \hline
StringBuffer & 57 & 3 & 13 & 3  \\   \hline
Test-Calfuzzer & 126 & 5 & 15 & 5  \\   \hline
DiningPhil & 210 & 6 & 20 & 5  \\   \hline
HashTable & 222 & 3 & 4 & 2  \\   \hline
Account & 617 & 6 & 46 & 6  \\   \hline
Log4j2 & 1\mbox{K} & 4 & 333 & 10  \\   \hline
Dbcp1 & 2\mbox{K} & 3 & 767 & 4  \\   \hline
Dbcp2 & 2\mbox{K} & 3 & 591 & 9  \\   \hline
Derby2 & 3\mbox{K} & 3 & 1\mbox{K} & 3  \\   \hline
elevator & 222\mbox{K} & 5 & 726 & 51  \\   \hline
hedc & 410\mbox{K} & 7 & 109\mbox{K} & 7  \\   \hline
JDBCMySQL-1 & 436\mbox{K} & 3 & 73\mbox{K} & 10  \\   \hline
JDBCMySQL-2 & 436\mbox{K} & 3 & 73\mbox{K} & 10  \\   \hline
JDBCMySQL-3 & 436\mbox{K} & 3 & 73\mbox{K} & 12  \\   \hline
JDBCMySQL-4 & 437\mbox{K} & 3 & 73\mbox{K} & 13  \\   \hline
cache4j & 758\mbox{K} & 2 & 46\mbox{K} & 19  \\   \hline
ArrayList & 3\mbox{M} & 801 & 121\mbox{K} & 801  \\   \hline
IdentityHashMap & 3\mbox{M} & 801 & 496\mbox{K} & 801  \\   \hline
Stack & 3\mbox{M} & 801 & 118\mbox{K} & 2\mbox{K}  \\   \hline
LinkedList & 3\mbox{M} & 801 & 290\mbox{K} & 801  \\   \hline
HashMap & 3\mbox{M} & 801 & 555\mbox{K} & 801  \\   \hline
WeakHashMap & 3\mbox{M} & 801 & 540\mbox{K} & 801  \\   \hline
Vector & 3\mbox{M} & 3 & 14 & 3  \\   \hline
LinkedHashMap & 4\mbox{M} & 801 & 617\mbox{K} & 801  \\   \hline
montecarlo & 8\mbox{M} & 3 & 850\mbox{K} & 2  \\   \hline
TreeMap & 9\mbox{M} & 801 & 493\mbox{K} & 801  \\   \hline
hsqldb & 20\mbox{M} & 46 & 945\mbox{K} & 402  \\   \hline
sunflow & 21\mbox{M} & 15 & 2\mbox{M} & 11  \\   \hline
jspider & 22\mbox{M} & 11 & 5\mbox{M} & 14  \\   \hline
tradesoap & 42\mbox{M} & 236 & 3\mbox{M} & 6\mbox{K}  \\   \hline
tradebeans & 42\mbox{M} & 236 & 3\mbox{M} & 6\mbox{K}  \\   \hline
TestPerf & 80\mbox{M} & 50 & 598 & 8  \\   \hline
Groovy2 & 120\mbox{M} & 13 & 13\mbox{M} & 10\mbox{K}  \\   \hline
tsp & 307\mbox{M} & 10 & 181\mbox{K} & 2  \\   \hline
lusearch & 217\mbox{M} & 10 & 5\mbox{M} & 118  \\   \hline
biojava & 221\mbox{M} & 6 & 121\mbox{K} & 78  \\   \hline
graphchi & 216\mbox{M} & 20 & 25\mbox{M} & 60  \\   \hline
 \hline
\Sum & 1354\mbox{M} & 7\mbox{K} & 61\mbox{M} & 30\mbox{K}
 \\  \hline   \end{tabular}
\end{table*}

\section{Traces Derived from C Programs}
\label{sec:extra-traces}

\begin{figure}

  \begin{subfigure}[t]{0.39\textwidth}
    \bda{l}
\ba{|l|l|l|l|}
\hline
\traceNew{TraceCSROEXT} & \thread{1} & \thread{2} & \thread{3}\\ \hline
\eventE{1}  & \forkE{\thread{2}}&&\\
\eventE{2}  & \forkE{\thread{3}}&&\\
\eventE{3}  & \lockE{\LKB}&&\\
\eventE{4}  & \writeE{\VA}&&\\
\eventE{5}  & \lockE{\LKA}&&\\
\eventE{6}  & \unlockE{\LKB}&&\\
\eventE{7}  & &\lockE{\LKB}&\\
\eventE{8}  & &\readE{\VA}&\\
\eventE{9}  & &\unlockE{\LKB}&\\
\eventE{10}  & &\lockE{\LKC}&\\
\eventE{11}  & &\unlockE{\LKC}&\\
\eventE{12}  & &\writeE{\VB}&\\
\eventE{13}  & \readE{\VB}&&\\
\eventE{14}  & \unlockE{\LKA}&&\\
\eventE{15}  & &&\lockE{\LKC}\\
\eventE{16}  & &&\lockE{\LKA}\\
\eventE{17}  & &&\unlockE{\LKA}\\
\eventE{18}  & &&\unlockE{\LKC}\\

 \hline \ea{}
    \eda
    \caption{Extennsion of~\cref{fig:ex-cs-ro}.}
 \label{fig:ext-ex-cs-ro}
  \end{subfigure}
  \qquad
  \begin{subfigure}[t]{0.39\textwidth}
\bda{l}
\ba{|l|l|l|l|l|}
\hline
\traceNew{TraceCSROBYNDEXT}& \thread{1} & \thread{2} & \thread{3} & \thread{4}\\ \hline
\eventE{1}  & \forkE{\thread{2}}&&&\\
\eventE{2}  & \forkE{\thread{3}}&&&\\
\eventE{3}  & \forkE{\thread{4}}&&&\\
\eventE{4}  & \lockE{\LKB}&&&\\
\eventE{5}  & \writeE{\VC}&&&\\
\eventE{6}  & \lockE{\LKA}&&&\\
\eventE{7}  & \unlockE{\LKB}&&&\\
\eventE{8}  & &\lockE{\LKB}&&\\
\eventE{9}  & &\writeE{\VB}&&\\
\eventE{10}  & &&\readE{\VB}&\\
\eventE{11}  & &&\readE{\VC}&\\
\eventE{12}  & &&\lockE{\LKC}&\\
\eventE{13}  & &&\unlockE{\LKC}&\\
\eventE{14}  & &&\writeE{\VA}&\\
\eventE{15}  & &\readE{\VA}&&\\
\eventE{16}  & &\lockE{\LKB}&&\\
\eventE{17}  & &\unlockE{\LKC}&&\\
\eventE{18}  & &\writeE{\VA}&&\\
\eventE{19}  & \readE{\VA}&&&\\
\eventE{20}  & \unlockE{\LKA}&&&\\
\eventE{21}  & &&&\lockE{\LKC}\\
\eventE{22}  & &&&\lockE{\LKA}\\
\eventE{23}  & &&&\unlockE{\LKA}\\
\eventE{24}  & &&&\unlockE{\LKC}\\
 \hline \ea{}

     \eda
    \caption{Extension of~\cref{fig:ex-cs-not}.}
 \label{fig:ext-ex-cs-not}
  \end{subfigure}

  \caption{Traces derived from some C programs.}
\label{fig:ext-ex-cs}
\end{figure}

Consider \cref{fig:ext-ex-cs}.
For trace~\traceRef{TraceCSROEXT} \VTwo\ fails to report the deadlock.
The deadlock is reported by~\VFour.
For trace~\traceRef{TraceCSROBYNDEXT} \VFour\ fails to report the deadlock.
The examples show that the inclusion (b) in \cref{le:pd-to-lw-ro}
is strict.
The corresponding C programs are part of the artifact.


\end{document}